\documentclass{article}

\usepackage{arxiv}

\newtheorem{theorem}{\bf Theorem}[section]
\newtheorem{proposition}[theorem]{\bf Proposition}

\usepackage[utf8]{inputenc}
\usepackage[T1]{fontenc}
\usepackage{hyperref}
\usepackage{url}
\usepackage{booktabs}
\usepackage{amsfonts}
\usepackage{nicefrac}
\usepackage{microtype}
\usepackage{relsize}
\usepackage{verbatim}
\usepackage{graphicx}
\usepackage{dcolumn}
\usepackage{bm}
\usepackage{float}
\usepackage{graphics}
\usepackage{color}
\usepackage{xcolor}
\usepackage{soul}
\usepackage{subfigure}
\usepackage{subcaption}
\usepackage{tikz}

\usepackage{tabularx}
\usepackage{array}
\usepackage{booktabs}

\newcolumntype{L}[1]{>{\raggedright\arraybackslash\hsize=#1\hsize}X}

\usetikzlibrary{arrows.meta,positioning,fit,calc}
\usepackage{booktabs}
\usepackage{amsmath,amssymb,amsfonts}

\usepackage{algorithm}
\usepackage{algpseudocode}
\usepackage{enumitem}
\setlist[itemize]{leftmargin=*}
\setlist[enumerate]{leftmargin=*}
\definecolor{rev}{rgb}{0,0,0}
\definecolor{rev2}{rgb}{0,0,0}
\usepackage{array}
\newcolumntype{P}[1]{>{\centering\arraybackslash}p{#1}}
\usepackage{multirow}
\usepackage{cancel}
\usepackage[capitalise]{cleveref}

\usepackage{cite}
\usepackage{tabularx}
\usepackage{threeparttable}
\usepackage{pifont}

\newcolumntype{Y}{>{\centering\arraybackslash}X}

\newcommand{\R}{\mathbb{R}}

\newcommand{\bs}[1]{\boldsymbol{#1}}
\newcommand{\vc}{\bs{c}}
\newcommand{\vx}{\bs{x}}
\newcommand{\vu}{\bs{u}}
\newcommand{\vp}{\bs{\pi}}          
\newcommand{\vzeta}{\bs{\zeta}}
\newcommand{\Gop}{\mathcal{G}}
\newcommand{\relL}{\mathcal{E}_{L^2}}
\newcommand{\Bnet}{\mathcal{B}_{\theta}}
\newcommand{\Tnet}{\mathcal{T}_{\theta}}

\title{Spectral-Embedded Operator Learning for Three-Phase Interfacial Flow: A Ternary Cahn-Hilliard-Navier-Stokes Benchmark}

\author{
  Muhammad Abid \\
  Department of Mechanical and Aerospace Engineering,\\
  University of Tennessee, Knoxville\\
  Knoxville, TN 37996, USA.\\
  \texttt{mabid@vols.utk.edu}
  \And
  Arth Sojitra \\
  Department of Mechanical and Aerospace Engineering\\
  University of Tennessee, Knoxville\\
  Knoxville, TN 37996, USA\\
  \texttt{asojitra@vols.utk.edu}
  \And
  Omer San \\
  Department of Mechanical and Aerospace Engineering,\\
  University of Tennessee, Knoxville\\
  Knoxville, TN 37996, USA.\\
  \texttt{osan@utk.edu}
}

\begin{document}
\maketitle

\begin{abstract}
Operator-learning surrogates have been benchmarked largely on single-field,
single-interface problems, leaving unclear whether architectural choices
validated in those settings transfer to constrained, multiphase flows. We
introduce a three-phase interfacial-flow benchmark to examine whether the trunk
coordinate representation matters for a multi-channel, interface-dominated
target. The configuration consists of an air bubble rising through water,
piercing a water-oil interface, and entraining a water plume into the oil
within a bounded, wall-confined domain. Reference data are generated using a
structure-preserving ternary Cahn-Hilliard-Navier-Stokes solver that
algebraically preserves the simplex constraint. From $1{,}024$ Sobol-sampled
simulations spanning a nine-dimensional parameter space, we learn the mapping
from physical parameters to five-channel space-time fields. We compare three
parameter-matched DeepONet variants differing only in trunk representation:
raw coordinates (DeepONet), random Fourier features (FEDONet), and a fixed
tensor-product Chebyshev dictionary (SEDONet). SEDONet reduces the test
relative $L^2$ error by $16.8\%$ compared with FEDONet and by $24.0\%$
compared with DeepONet, while improving all five output channels. Spatial and
temporal error analyses localize the principal gains near the diffuse
interfaces and after bubble breakthrough. The results indicate that the
Chebyshev representation is particularly effective for the strongly
non-periodic wall-normal and temporal structure of this three-phase flow.
\end{abstract}

\textbf{Keywords:}
Scientific Machine Learning (SciML); Neural Operators; Three-phase Flow; Spectral Methods; Ternary Cahn-Hilliard-Navier-Stokes

\section{Introduction}
\label{sec:intro}
Flows in which three immiscible fluids share a domain are common in industrial
practice: enhanced oil recovery moves water, oil and gas through the same pore
network, froth flotation separates minerals at gas-water-solid contacts, and
geological $\mathrm{CO}_2$ sequestration and nuclear severe-accident analysis
involve at least three phases in simultaneous contact. Relative to two-phase
flow the ternary problem is a qualitative change in difficulty, for three
reasons. There are three pairwise interfaces instead of one, each with its own
tension $\sigma_{ij}$, and they meet along triple junctions whose geometry is
fixed by a Neumann-triangle balance, so contact angles are determined by the
three tensions jointly rather than independently \cite{degennes1985}. When the
tensions violate the partial-spreading inequalities, one phase films
spontaneously between the other two with no external forcing, so the parameter
space contains regime boundaries that a naive sampling strategy will cross
without warning \cite{boyer2006}. And the phase variables are algebraically
constrained, summing to unity pointwise, a condition any
faithful discretization, and we will argue any faithful surrogate, should
respect rather than merely approximate \cite{kim2012}.

The canonical model problem for this regime, and the one we adopt, is a gas
bubble rising through a lower liquid and crossing a horizontal liquid-liquid
interface. Even in isolation this configuration exhibits a rich sequence of
regimes: the bubble deforms into a spherical cap, decelerates as it approaches
the interface, drapes a film of the lower liquid over its crown, breaks
through, and drags a column of the lower liquid into the upper one, which
subsequently thins, necks and pinches. These crossing regimes have been mapped
experimentally and numerically as a function of Bond and Archimedes numbers,
showing that the entrained column and its breakup are governed by an
inertia-capillarity competition sensitive to all three tensions and both
density contrasts \cite{bonhomme2012}. The problem therefore packages, in one
compact two-dimensional domain, essentially every feature that makes multiphase
simulation hard: thin films, high curvature, a moving triple junction and a
genuine topological transition.

Two families of numerical methods can resolve such a flow, and the choice
between them is dictated by the topological transition. Sharp-interface
approaches, volume of fluid \cite{hirt1981}, level sets \cite{sussman1994},
front tracking \cite{tryggvason2001}, represent the interface explicitly and
reconstruct it at each step, achieving excellent geometric fidelity in
two-phase settings but requiring dedicated junction models \cite{smith2002}
and ad hoc surgery precisely when the topology changes. Diffuse-interface, or
phase-field, methods take the opposite route: the interface becomes a thin
transition layer of width $\varepsilon$ over which an order parameter varies
smoothly, and topological change is handled automatically because nothing
needs to be reconstructed. The formulation descends from the free energy of a
nonuniform system \cite{cahn1958}, its coupling to fluid mechanics has been
reviewed \cite{anderson1998}, its thermodynamic consistency at variable
density was settled in \cite{lowengrub1998, abels2012}, and the schemes used
at large density ratios follow \cite{jacqmin1999, ding2007}. For three or more
components, a free energy algebraically consistent with the three pairwise
tensions keeps the phase variables on the Gibbs simplex \cite{boyer2006}; it
is this construction we adopt, and its algebraic structure is what allows our
reference solver to make the simplex constraint exact rather than approximate.

The price of that fidelity is severe, and structural rather than incidental. In
a diffuse-interface formulation the transition layer must be resolved by
several mesh cells, so the interface thickness is tied to the grid and the grid
cannot be coarsened; and the explicit treatment of the interfacial forces
imposes a time-step restriction that tightens faster than the mesh is refined, a capillary-wave limit $\Delta t \sim \sqrt{\rho h^3/\sigma}$ in the
capillary-dominated regime \cite{brackbill1992}, and a viscous limit
$\Delta t \sim h^2/\nu$ where the viscosity contrast is large. Careful
treatment of the variable-density pressure coupling \cite{dodd2014} makes each
step cheap, but it cannot reduce their number: a single two-dimensional run
costs tens of thousands of steps. A parametric study, sweeping tensions,
density ratios, bubble size, layer thickness, multiplies that by the size of
the design, and it is precisely such studies that engineering practice demands.
This is the cost structure that surrogate modelling exists to break.

Machine-learning surrogates for fluid mechanics have matured rapidly toward
that goal; the landscape has been surveyed broadly
\cite{brunton2020, vinuesa2022}, and its intersection with interface capturing
specifically \cite{gibou2019}. Early work targeted particular tasks: hybrid
solvers in which a learned component corrects a coarse-grid discretization
\cite{kochkov2021}, and physics-informed neural networks that embed the PDE
residual in the training objective \cite{raissi2019pinn, karniadakis2021}. A
limitation these share is that they learn a function, one solution
field, on one discretization, for one set of conditions, so a parametric
sweep requires retraining, which reintroduces the cost the surrogate was meant
to remove.

Operator learning removes that limitation by targeting the solution map itself,
as a mapping between infinite-dimensional function spaces. Its theoretical
foundation is the operator universal approximation theorem
\cite{chenchen1995}, realized in practice by the Deep Operator Network
\cite{lu2021} through a branch--trunk factorization in which one network
encodes the input and another the query coordinate; the neural-operator
framework \cite{kovachki2023} gives an alternative construction based on
learned integral kernels, of which the Fourier Neural Operator
\cite{li2021fno} is the best-known instance. The approach has already been
deployed at scale on multiphase problems in porous media
\cite{wen2022ufno, wen2023nested} and at a planetary scale in weather
forecasting \cite{pathak2022, lam2023graphcast}, and the field is
consolidating around shared benchmarks \cite{takamoto2022pdebench} and
multi-physics pretrained models \cite{herde2024poseidon}. What is
conspicuously absent from those benchmark suites is a problem with three
phases, a moving triple junction, and an algebraic constraint on the state,
which is to say, the problem this paper contributes.

Within the branch-trunk architecture, the component that encodes the query
coordinate, the trunk is almost always a shallow multilayer perceptron
acting on raw $(x,y,t)$, and that choice inherits a well-documented pathology.
Neural networks exhibit spectral bias, fitting low-frequency content first and
fastest and high-frequency content last or not at all \cite{rahaman2019}; the
effect has been characterized in the Fourier domain as a frequency principle
\cite{xu2020frequency} and connected to the eigenspectrum of the neural
tangent kernel \cite{jacot2018ntk}. For interfacial flow the bias is maximally
badly placed. The bulk of a phase-field solution is nearly constant and
carries almost no information, while everything physically interesting lives
in transition layers whose width is the smallest length scale in the problem, exactly the content the network learns last. The standard remedy is to
replace the raw coordinate with a fixed high-dimensional dictionary. Random
Fourier features \cite{tancik2020}, whose kernel interpretation goes back to
the random-features construction for large-scale kernel machines
\cite{rahimi2007} and whose effect on the tangent-kernel spectrum has since
been analyzed \cite{wang2021eigenvector}, were transplanted into the DeepONet
trunk as FEDONet \cite{sojitra2025}. A sinusoidal dictionary, however, encodes
a periodic prior, whereas classical numerical analysis settled long ago that
the natural basis on a bounded, non-periodic interval is polynomial
\cite{boyd2001, trefethen2013}, the motivation for the Chebyshev-embedded
trunk of SEDONet \cite{abid2026}.

Two structural facts about the present target shape what a surrogate for it can
and should do, and both are easy to overlook. The first is that the state is
confined to an affine subspace. A learned model can satisfy such a constraint
approximately, by penalizing violations in the loss, or exactly, by building it
into the output map; for climate emulation the second route has been shown to
be both cheaper and more reliable when the constraint is linear, since it costs
no capacity and cannot be traded away against the data term \cite{beucler2021}.
The second is that a branch-trunk prediction is a linear combination of a
fixed number of learned basis functions, so its accuracy is bounded below by
how well a subspace of that dimension approximates the solution manifold,
the Kolmogorov width, whose behaviour for parametric PDEs has been surveyed
\cite{cohen2015} and which is known to decay slowly for advection-dominated
problems with moving features \cite{peherstorfer2022}. Together these say that
the algebraic constraint should be imposed and not learned, and that no
coordinate dictionary should be expected to buy orders of magnitude on a
transport-dominated target. Both predictions are borne out below.

This paper addresses two gaps. The first is that the evidence for spectral
trunks, while encouraging, is narrow: existing evaluations use single-field
targets on domains that are one spatial dimension plus time or two spatial
dimensions with no time, carrying no algebraic constraint on the state, no
competition between fields of different character and no genuinely
interface-dominated physics. Whether a trunk embedding still matters when the
target is a five-channel space-time field on a constrained manifold, dominated
by three mutually interacting interfaces and a topological transition, is an
open question. The second gap is subtler: because operator surrogates are
almost always evaluated by pointwise error norms alone, it is rarely possible
to tell where an architectural change helps. A reduction in relative
$L^2$ error of any size is compatible with a uniform improvement
everywhere, with a large improvement in a small region, or with an improvement
in the bulk purchased at the cost of the interfaces, entirely different
outcomes for interfacial physics, of which only the second is worth having. A
scalar norm also cannot discriminate between competing explanations of an
improvement, so a result reported that way leaves the reader unable to judge
whether it will transfer to any other problem.

Our contributions are as follows.

\begin{enumerate}
\item \textbf{A ternary CHNS operator-learning benchmark:} An air bubble rising
through water, breaking through a water-oil interface and entraining a plume.
The target is a five-channel space-time field $(c_w,c_o,c_a,u,v)$ over a
nine-dimensional parameter space, produced by a $1{,}024$-member Sobol
\cite{sobol1967} ensemble. To our knowledge this is the first operator-learning
benchmark with three phases and a simplex-constrained state.

\item \textbf{A structure-preserving reference solver:} Choosing the mobility as
$M_i = M_0/\Sigma_i$ makes $\sum_i M_i\mu_i \equiv 0$, so $\sum_i c_i = 1$ is an
identity of the semi-discrete system rather than something the solver is nudged
toward (\Cref{prop:simplex}). Any simplex violation exhibited by a surrogate is
therefore provably its own. Exact DCT-II diagonalization of the Laplacian, the
Dodd-Ferrante split \cite{dodd2014} and batched execution in JAX \cite{jax2018}
give $1{,}024$ runs in $37$ minutes.

\item \textbf{Graded truncation of the Chebyshev dictionary:} Extending the
trunk embedding of \cite{abid2026} from $(x,t)$ to $(x,y,t)$ exposes a defect in
its lexicographic crop, which in three coordinates retains full degree in the
last index and almost none in the first. Total-degree ordering yields a
degree-balanced dictionary at identical feature budget.

\item \textbf{A parameter-exact controlled comparison:} The three models share
branch and trunk depths, widths, optimizer, schedule and data, differing only in
the trunk embedding. FEDONet and SEDONet have identical parameter counts, so the
gap between them cannot be attributed to capacity.

\item \textbf{Localization of the gain:} We decompose the error spatially,
temporally, by channel and across the parameter space. Essentially the entire
SEDONet advantage sits at the diffuse interfaces and emerges only after
breakthrough. The decomposition also falsifies the obvious mechanism, the
error is smallest at the walls, and isolates an alternative, the
representation of monotone, non-periodic trends in the wall-normal and temporal
coordinates (\Cref{sec:res:accuracy}).
\end{enumerate}

The remainder of the paper is organized as follows. \Cref{sec:related} works
through the literature in more detail, \Cref{sec:model} states the ternary CHNS
model and establishes its structural properties, and \Cref{sec:opnet}
formulates the operator-learning problem, the three trunk embeddings and the
simplex closure. \Cref{sec:results} reports convergence, accuracy and the
spatial, temporal and per-channel error decompositions. \Cref{sec:conclusion}
summarizes and states limitations, and \Cref{sec:future} sets out what we would
do next.
\section{Related Work}
\label{sec:related}

\subsection{Interface capturing and three-phase flow}
\label{sec:related:cfd}

The distinction drawn in \Cref{sec:intro} between sharp- and diffuse-interface
methods deserves elaboration, because it is what determines whether a
three-phase configuration is tractable at all. Sharp-interface approaches
represent the interface explicitly or implicitly and reconstruct it at each
step: volume of fluid \cite{hirt1981}, level sets \cite{sussman1994} and front
tracking \cite{tryggvason2001} are the standard representatives, and all three
achieve excellent mass or geometric fidelity in two-phase settings. Extending
them to three phases is where the difficulty concentrates. Triple junctions
require either explicit junction models or coupled level-set systems with
projection to prevent vacuum and overlap \cite{smith2002}, and topological
change, the very event of interest here, requires ad hoc surgery, so the
method must be told in advance about the events it is being used to discover.

Diffuse-interface methods trade that difficulty for a resolution requirement.
Because the order parameter varies smoothly over a layer of width
$\varepsilon$, merging and pinching are automatic consequences of the dynamics
rather than special cases in the code; the cost is that $\varepsilon$ must be
resolved. Within this family the multi-component case has been developed along
two lines. A free energy algebraically consistent with the three pairwise
tensions keeps the phase variables on the Gibbs simplex \cite{boyer2006}, and
the coupled Cahn-Hilliard/Navier-Stokes system built on it is developed in
\cite{boyer2010}; related multi-component formulations are given in
\cite{kim2005, kim2012}. The feature of the Boyer-Lapuerta construction that
matters most here is not its free energy but its Lagrange-multiplier term,
whose particular prefactor makes a specific choice of mobility turn the simplex
condition into an identity; we exploit exactly that in
\Cref{sec:model:simplex}.

The numerical ingredients are individually standard, but their combination is
what makes a large ensemble affordable. Energy-stable time integration for
Cahn-Hilliard systems rests on convex splitting \cite{eyre1998} and, more
recently, on the scalar auxiliary variable framework \cite{shen2018sav}. The
velocity-pressure coupling follows the projection method \cite{chorin1968},
with the variable-density modification \cite{dodd2014} that preserves a
constant-coefficient Poisson operator, the property that makes a
transform-based solve viable at every step and hence that removes iteration
and run-to-run cost variability from the ensemble. The capillary force is most
conveniently written in continuum-surface form \cite{brackbill1992}, whose
ternary generalization requires no estimate of interface normals or of
triple-junction geometry and therefore degrades gracefully exactly where a
curvature-based force would fail.

\subsection{Operator learning for parametric PDEs}
\label{sec:related:opnet}

The branch-trunk factorization \cite{lu2021} and the integral-kernel framework
\cite{kovachki2023} define the two dominant families, and the choice between
them is not settled. Approximation-theoretic guarantees for the first are
available \cite{lanthaler2022}, and a controlled comparison of the two on
matched data has been reported \cite{lu2022fair}, a study whose methodology,
holding everything fixed except the component under test, is the one we adopt
here at the level of a single layer rather than a whole architecture.

The subsequent literature has enriched the landscape along several axes:
alternative bases such as wavelets \cite{tripura2023wno}; attention and
transformer mechanisms \cite{hao2023gnot}; convolutional operators with
continuous-discrete equivalence \cite{raonic2023cno}; geometry-aware and
resolution-independent formulations \cite{li2023geofno, bahmani2025rino};
multifidelity closure constructions \cite{howard2023multifidelity}; and
physics-informed variants that embed residuals or variational principles
directly \cite{wang2021pideeponet, goswami2022vdeeponet}. Of particular
relevance here is a small body of work on interface-dominated problems, in
which architectures are modified to accommodate discontinuities across an
interface \cite{wu2024ionet, bi2025xideeponet}, and on the mitigation of
spectral bias in multiscale operator learning \cite{liu2024spectralbias}. These differ from ours in strategy rather than in aim: they change the
architecture to accommodate the interface, whereas we change only the
coordinate representation. The two are complementary, and our results establish
how much of the gap a coordinate dictionary alone can close.

Operator surrogates are past the proof-of-concept stage on problems with
genuine industrial stakes: reservoir-scale $\mathrm{CO}_2$ plume migration with
U-FNO \cite{wen2022ufno} and its nested successor \cite{wen2023nested}, and
operational-quality weather forecasting with FourCastNet \cite{pathak2022} and
GraphCast \cite{lam2023graphcast}. It is worth noting what they have in common:
all four are single-phase or effectively two-phase, all are posed on periodic
or quasi-periodic domains, and none carries a pointwise algebraic constraint on
the state. The same is true of the shared benchmark suites
\cite{takamoto2022pdebench, herde2024poseidon}, which is why the family of
problems studied here is not represented in them.

\subsection{Coordinate representations and spectral trunks}
\label{sec:related:embed}

The limitation motivating a coordinate embedding is well characterized. Neural
networks fit low frequencies first and fastest \cite{rahaman2019}; the
phenomenon has been described in the Fourier domain as a frequency principle
\cite{xu2020frequency} and connected to the eigenspectrum of the neural tangent
kernel \cite{jacot2018ntk}, which supplies the mechanism: convergence along an
eigendirection is governed by its eigenvalue, so directions with small
eigenvalues are learned last regardless of how important they are to the
target. Random Fourier features \cite{tancik2020}, whose kernel interpretation
goes back to the random-features construction for large-scale kernel machines
\cite{rahimi2007}, reshape that spectrum, and the reshaping has been analyzed
directly \cite{wang2021eigenvector}. Within operator learning, FEDONet
\cite{sojitra2025} transplants the construction into the DeepONet trunk and
reports gains across a range of PDE families, with the largest improvements on
oscillatory, broadband problems such as Kuramoto-Sivashinsky, a pattern
that is itself informative, since those are the problems whose structure best
matches a sinusoidal prior.

That prior is the limitation. A dictionary built from functions periodic on the
embedding scale is well matched to tori and to broadband oscillatory content,
and poorly matched to bounded domains with Dirichlet or Neumann boundaries,
which describes most of computational mechanics, and describes the
wall-confined, strictly non-periodic domain of the present problem exactly.
Classical numerical analysis settled the alternative long ago: on a bounded
interval the natural orthogonal basis is polynomial \cite{boyd2001}. Chebyshev
expansions converge geometrically for analytic functions \cite{trefethen2013},
their nodes cluster near the endpoints where a bounded-domain solution varies
most sharply \cite{gottlieb1977}, and the resulting differentiation operators
are well conditioned \cite{canuto2006}.

A growing literature carries that structure into neural architectures:
Chebyshev feature networks for function approximation \cite{xu2024chebfeature},
Chebyshev spectral neural networks for PDEs \cite{yin2024cheb}, spectral neural
operators combining Fourier and Chebyshev representations
\cite{fanaskov2023spectral}, and an orthogonal polynomial neural operator
\cite{liu2024opno}, which targets nonperiodic boundary conditions directly.
Most of these parametrize the solution in spectral space or map between
coefficient vectors, so the spectral representation becomes part of the model's
output as well as its input; that buys accuracy but forfeits the ability to
attribute a change in performance to the coordinate representation alone.
SEDONet \cite{abid2026} takes a minimal route: it replaces only
the trunk's coordinate input with a fixed tensor-product Chebyshev dictionary,
leaving the branch, the factorization and the parameter count untouched, and
reports consistent improvements over baseline DeepONet on a suite of canonical
benchmarks, with the largest gains on bounded, non-periodic problems. That
minimality is what makes the controlled comparison of this paper possible.

\subsection{Structural constraints, representational limits, and the remaining
gap}
\label{sec:related:structure}
The two structural facts noted above, an affine constraint on the state and
the bounded rank of a branch-trunk prediction, each carry a consequence
specific to this benchmark. On the first, what makes the present setting
unusual is not that the constraint can be imposed exactly \cite{beucler2021},
but the other side of the comparison: the reference data satisfies it to
round-off by construction, so a surrogate's violation is unambiguously its own
and can be reported as a diagnostic rather than absorbed into the data's own
error budget. Benchmarks whose reference data merely approximates a constraint
cannot make that separation, and a surrogate evaluated on them can appear
constraint-consistent for the wrong reason.

On the second, the slow decay of the Kolmogorov width for advection-dominated
problems \cite{cohen2015} has a well-documented cause: a translating sharp
feature is not well approximated by any fixed linear combination of fixed
spatial modes. The limitation is recorded in the reduced-basis literature
\cite{ohlberger2016, peherstorfer2022} and has motivated nonlinear alternatives
such as manifold-based reduction with deep convolutional autoencoders
\cite{lee2020}. The bound applies to every architecture we compare, and it is
the most likely reason all three saturate at a similar accuracy floor. It also
locates the contribution of a coordinate dictionary correctly: a dictionary
changes how quickly and stably a good basis is found, not which bases are
reachable.

Taken together, the picture is as follows. Three-phase phase-field simulation
is mature but expensive, and its structural properties, an algebraic
constraint on the state, a moving triple junction, and a genuine topological
transition are exactly the properties that operator-learning benchmarks
currently omit. Coordinate embeddings for the trunk are a promising and
minimally invasive intervention, but the evidence for them comes from
single-field targets on unconstrained domains, and it is reported as scalar
error norms that cannot say where an architectural change helps or which of
several plausible mechanisms produced it. Nothing in the existing literature
answers whether a trunk dictionary still matters on a multi-channel,
constrained, interface-dominated target. The benchmark and the controlled
comparison that follow are designed to answer both questions, and the error
decompositions of \Cref{sec:results} are designed to answer the second one in a
form that transfers to problems other than this one.

\section{Governing Model: Ternary Cahn--Hilliard--Navier--Stokes}
\label{sec:model}

This section states the model that generates the reference data and
establishes the one structural property on which the evaluation protocol
depends. \Cref{fig:sequence} shows what the model produces.
\Cref{sec:model:transport} gives the Boyer-Lapuerta ternary Cahn-Hilliard
system and the algebraic structure that keeps the phase variables on the Gibbs
simplex; \Cref{sec:model:momentum} couples it to variable-density
Navier--Stokes and fixes the boundary conditions; \Cref{sec:model:simplex}
shows that the simplex constraint is an identity of the model rather than an
approximation, which is what makes any simplex violation by a surrogate
attributable to the surrogate; and \Cref{sec:nondim} reduces the physics to the
nine dimensionless parameters that the ensemble varies.

\begin{figure}[H]
\centering
\includegraphics[width=0.82\textwidth]{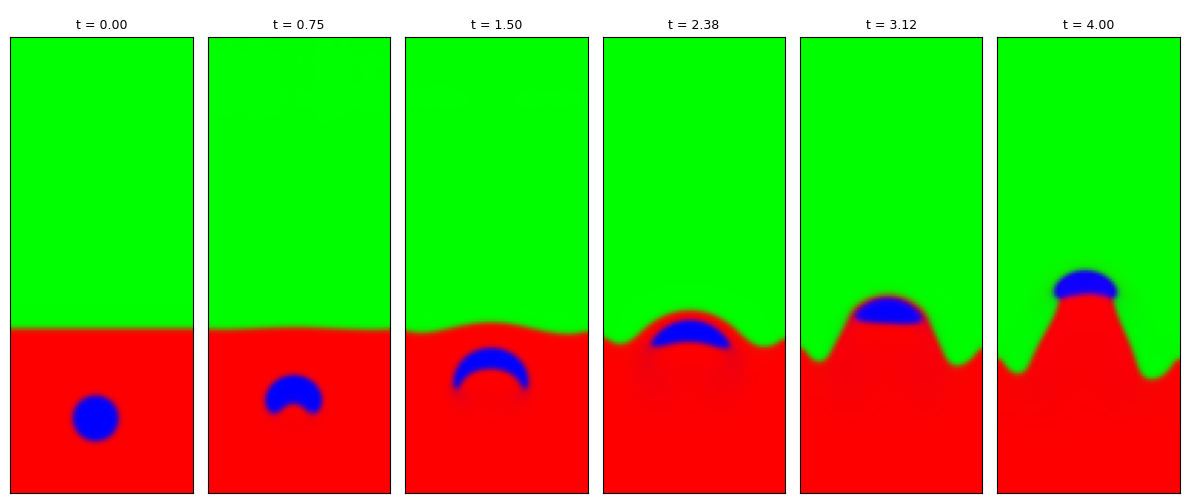}
\caption{The target physics: RGB composite of the three volume fractions (red
$=$ water, green $=$ oil, blue $=$ air) at
$t = 0,\,0.75,\,1.50,\,2.38,\,3.12,\,4.00$. The bubble rises, deforms into a
spherical cap, pierces the water-oil interface, and entrains a water plume
that thins and pinches as the bubble detaches.}
\label{fig:sequence}
\end{figure}

\subsection{Phase transport}
\label{sec:model:transport}

Let $\Omega = (0,L_x)\times(0,L_y) \subset \R^2$ be a bounded, wall-confined
domain and let $\vc = (c_1,c_2,c_3)$ denote the volume fractions of water,
oil, and air, for which we write $(c_w,c_o,c_a) \equiv (c_1,c_2,c_3)$
interchangeably. The three fields are not independent: at every point of space
and time they are constrained to the Gibbs simplex,
\begin{equation}
\textstyle\sum_{i=1}^{3} c_i(\vx,t) = 1,
\qquad c_i \ge 0 .
\label{eq:simplex}
\end{equation}
This algebraic constraint has no counterpart in the two-phase problem, where a
single order parameter carries all the information, and it is the first of the
two features that make the ternary system a qualitatively different learning
target. \Cref{sec:model:simplex} concerns how \cref{eq:simplex} is maintained
by the model; how a surrogate should be made to respect it is taken up later.

The second such feature is that the three pairwise surface tensions
$\sigma_{ij}$ act jointly rather than separately. They enter the model through
the spreading coefficients \cite{boyer2006}
\begin{equation}
\Sigma_i = \sigma_{ij} + \sigma_{ik} - \sigma_{jk},
\qquad
\frac{3}{\Sigma_T} = \frac{1}{\Sigma_1}+\frac{1}{\Sigma_2}+\frac{1}{\Sigma_3},
\label{eq:spreading}
\end{equation}
with $\{i,j,k\}$ a permutation of $\{1,2,3\}$ and $\Sigma_T$ the harmonic mean
that will reappear below as a Lagrange multiplier. Their signs decide the
wetting regime. If $\Sigma_i>0$ for all $i$, the configuration is
partial-spreading: the three phases meet at a triple junction whose contact
angles are fixed jointly by the three tensions through a Neumann-triangle
balance \cite{degennes1985}. If $\Sigma_i<0$ for some $i$, the configuration is
total-spreading: phase $i$ films spontaneously between the other two and the
triple junction ceases to exist. The second regime is physically real but is
not a well-posed target for a single-valued solution operator, since
arbitrarily close parameter values then produce topologically distinct
arrangements; the sampling box of \Cref{sec:nondim} is accordingly
parametrized so that $\Sigma_i > 0$ holds identically over it. Those coefficients set the depth of the wells in the bulk free energy density,
which is the triple-well
\begin{equation}
F_0(\vc) = \sum_{i=1}^{3}\frac{\Sigma_i}{2}\,c_i^2(1-c_i)^2
 \;+\; \Lambda\, c_1^2 c_2^2 c_3^2 ,
\label{eq:F0}
\end{equation}
whose three minima at the pure-phase vertices of the simplex are what make the
diffuse interfaces stable. The optional term with $\Lambda \ge 0$ penalizes
spurious formation of a third phase at what should be a two-phase interface,
but throughout the present ensemble $\Lambda = 0$: over the admissible
spreading coefficients the pure triple-well is already free of spurious
third-phase layers, and the derivative $2\Lambda\prod_j c_j^2/c_i$ is singular
wherever a phase vanishes, which is almost everywhere in this problem.

Given the free energy, the phase fields are advected by the flow and relax by
Cahn--Hilliard dynamics,
\begin{equation}
\partial_t c_i + \nabla\!\cdot(\vu\,c_i) \;=\; \nabla\!\cdot\!\big(M_i \nabla \mu_i\big),
\qquad i = 1,2,3 ,
\label{eq:ch}
\end{equation}
with generalized chemical potentials
\begin{equation}
\mu_i \;=\; \underbrace{\frac{12}{\varepsilon}\,\frac{\partial F_0}{\partial c_i}
\;-\; \frac{4\,\Sigma_T}{\varepsilon}\sum_{j=1}^{3}\frac{1}{\Sigma_j}\,
      \frac{\partial F_0}{\partial c_j}}_{\textstyle =:\,N_i(\vc)}
\;-\; \frac{3}{4}\,\varepsilon\,\Sigma_i\,\nabla^2 c_i ,
\label{eq:mu}
\end{equation}
in which $\varepsilon$ is the interface thickness parameter and $M_i$ the
mobility of species $i$. We name the algebraic, gradient-free part $N_i$
explicitly because it carries the cancellation of \Cref{prop:simplex}. The
remaining term is the surface term, whose coefficient
$\tfrac34\varepsilon\Sigma_i$ sets both the equilibrium profile width and the
tension of interface $i$. The second term of \cref{eq:mu} deserves particular attention, because it is
independent of $i$. It is therefore not a species-specific contribution to the
thermodynamics but a Lagrange multiplier enforcing \cref{eq:simplex}, and its
prefactor $4\Sigma_T/\varepsilon$ is not a matter of convention. Together with
the mobility choice $M_i = M_0/\Sigma_i$ made in \Cref{sec:model:simplex}, that
prefactor is exactly what causes the weighted sum $\sum_i M_i N_i$ to vanish
identically. This is the single most important structural feature of the model
for our purposes, and everything the benchmark claims about constraint
preservation rests on it.

\subsection{Momentum and incompressibility}
\label{sec:model:momentum}

The phase fields of \cref{eq:ch} are transported by a velocity field that they
in turn determine, since the three fluids are treated as a single
incompressible mixture whose density and viscosity are composition-dependent,
\begin{equation}
\rho(\vc) = \sum_{i=1}^{3}\rho_i c_i,
\qquad
\eta(\vc) = \sum_{i=1}^{3}\eta_i c_i ,
\label{eq:props}
\end{equation}
so that both vary smoothly across each diffuse interface between their bulk
values. In the present ensemble the density ratio across the water-air
interface reaches $100$ and the viscosity ratio across the water-oil
interface reaches $3$, so the variable-coefficient structure that
\cref{eq:props} introduces into the momentum equation is not a formality. With those properties fixed, the mixture obeys
\begin{align}
\rho(\vc)\big(\partial_t \vu + \vu\!\cdot\!\nabla \vu\big)
&= -\nabla p
 + \nabla\!\cdot\!\Big[\eta(\vc)\big(\nabla\vu + \nabla\vu^{\!\top}\big)\Big]
 - \sum_{i=1}^{3} c_i \nabla \mu_i
 - \rho(\vc)\, g\, \bs{e}_y ,
\label{eq:mom}\\[2pt]
\nabla\!\cdot\vu &= 0 ,
\label{eq:div}
\end{align}
in which the buoyancy that drives the entire problem enters through the last
term of \cref{eq:mom}, as the imbalance between the local mixture density and
its surroundings. There is no separate forcing and no imposed inflow; the
bubble rises because \cref{eq:props} makes it lighter than the water around
it.

The capillary force appears as the potential term $-\sum_i c_i \nabla\mu_i$,
the ternary generalization of the continuum-surface-force representation
\cite{brackbill1992}, and this form is what makes a three-phase computation
tractable at all. A curvature-based force would require an estimate of the
interface normal and curvature for each of the three interfaces and, at the
triple junction, an explicit model of the junction geometry, precisely the
machinery that sharp-interface methods must supply and that fails when the
topology changes. The potential form requires none of it. It is assembled from
quantities the Cahn-Hilliard system already computes, and it produces the
correct Neumann-triangle balance at the junction automatically, as a
consequence of \cref{eq:spreading} rather than as an imposed condition. It remains to close the system at the walls. The domain is a closed box: the
phase fields and chemical potentials satisfy homogeneous Neumann conditions,
corresponding to neutral wetting; the velocity satisfies no-slip and
no-penetration; and the pressure satisfies homogeneous Neumann,
\begin{equation}
\partial_n c_i = \partial_n \mu_i = 0,\quad
\vu = \bs{0},\quad
\partial_n p = 0
\qquad \text{on } \partial\Omega .
\label{eq:bc}
\end{equation}
This choice matters in two ways. Every condition on a scalar in
\cref{eq:bc} is homogeneous Neumann, which permits an exact cosine
diagonalization of the discrete Laplacian and hence a transform-based solve at
every step; and the domain is strictly non-periodic in both $x$ and $y$, a
property that \Cref{sec:embeddings} returns to when the three trunk
embeddings are compared.
\subsection{Structural consistency of the simplex constraint}
\label{sec:model:simplex}

The mobilities have so far been left free. Fixing them in inverse proportion
to the spreading coefficients turns \cref{eq:simplex} from a constraint that
must be enforced into an identity that holds automatically.

\begin{proposition}[Simplex preservation]
\label{prop:simplex}
Let the mobilities be chosen as $M_i = M_0/\Sigma_i$ with $M_0>0$ constant.
Then the algebraic part of the chemical potential satisfies
\begin{equation}
\sum_{i=1}^{3} M_i\,N_i(\vc) \;\equiv\; 0
\qquad \text{for every } \vc \in \R^3 ,
\label{eq:consistency}
\end{equation}
and, if in addition $\vu$ is divergence free and $\sum_i c_i = 1$, then
$\sum_i M_i \mu_i \equiv 0$ and
$\partial_t\!\left(\sum_i c_i\right) = 0$; the constraint \cref{eq:simplex}
is propagated exactly by \cref{eq:ch}.
\end{proposition}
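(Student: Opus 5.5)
The plan is a direct algebraic computation followed by a summation of the transport equations \cref{eq:ch}. For \cref{eq:consistency}, I would write $g_j := \partial F_0/\partial c_j$ and $S(\vc) := \sum_j g_j/\Sigma_j$, so that by \cref{eq:mu} the algebraic part is $N_i = \frac{12}{\varepsilon}g_i - \frac{4\Sigma_T}{\varepsilon}S$. Multiplying by $M_i = M_0/\Sigma_i$ and summing over $i$ gives
\begin{equation*}
\sum_{i=1}^{3} M_i N_i = M_0\Big[\frac{12}{\varepsilon}\,S - \frac{4\Sigma_T}{\varepsilon}\,S\sum_{i=1}^{3}\frac{1}{\Sigma_i}\Big].
\end{equation*}
The harmonic-mean definition in \cref{eq:spreading} gives $\sum_i 1/\Sigma_i = 3/\Sigma_T$, so the second term equals $\frac{12}{\varepsilon}S$ and the bracket vanishes. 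This holds for every $\vc\in\R^3$, because no property of $F_0$ is used beyond differentiability. The only hypothesis needed is $\Sigma_i \neq 0$, which the sampling box guarantees since $\Sigma_i>0$.

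Next, using \cref{eq:mu} again, the full weighted sum is
\begin{equation*}
\sum_i M_i\mu_i = \sum_i M_i N_i - \tfrac34\varepsilon M_0\,\nabla^2\Big(\sum_i c_i\Big),
\end{equation*}
since $M_i\Sigma_i = M_0$ for every $i$. The first term is zero by \cref{eq:consistency}, and the second vanishes when $\sum_i c_i \equiv 1$. Summing \cref{eq:ch} over $i$ and writing $\Phi := \sum_i c_i$, the constancy of the $M_i$ gives
\begin{equation*}
\partial_t \Phi + \nabla\cdot(\vu\,\Phi) = \nabla^2\Big(\sum_i M_i\mu_i\Big).
\end{equation*}
With $\Phi=1$ and $\nabla\cdot\vu=0$, the advective term is zero, and the right-hand side is zero by the previous step. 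Hence $\partial_t\Phi=0$.

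The main obstacle is interpretive. The pointwise statement "if $\Phi=1$ then $\partial_t\Phi=0$" only shows that the constraint manifold is invariant as a set of states. To claim that the constraint is genuinely \emph{propagated} by \cref{eq:ch}, I would strengthen the argument without assuming $\Phi\equiv1$ at later times. Let $s := \Phi - 1$. The same computation, now for arbitrary $\vc$, yields the closed linear equation
\begin{equation*}
\partial_t s + \vu\cdot\nabla s = -\tfrac34\varepsilon M_0\,\nabla^4 s,
\end{equation*}
with $\partial_n s = \partial_n\nabla^2 s = 0$ on $\partial\Omega$, inherited from \cref{eq:bc}. Multiplying by $s$ and integrating over $\Omega$ should then work as follows. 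The advection term drops out by incompressibility together with $\vu=\bs 0$ on the boundary. Integrating the biharmonic term by parts, with the Neumann conditions, gives $\frac{d}{dt}\|s\|_{L^2}^2 = -\tfrac32\varepsilon M_0\|\nabla^2 s\|_{L^2}^2 \le 0$. Therefore $s(0)=0$ implies $s\equiv0$ for all $t$.

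I expect the boundary bookkeeping in this integration by parts to be the delicate point. It requires $\partial_n\nabla^2 c_i = 0$, which follows from $\partial_n\mu_i=0$ combined with $\partial_n c_i=0$, because $\partial_n N_i$ is a function of the $c_j$ times $\partial_n c_j$. I would verify this step carefully. The same argument goes through verbatim for the semi-discrete system with the DCT-diagonalized Laplacian.
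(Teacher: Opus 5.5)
Your algebra for $\sum_i M_iN_i$ and $\sum_i M_i\mu_i$ is the paper's (Appendix~\ref{app:proof}). For the literal statement, that $\sum_i c_i=1$ at an instant forces $\partial_t\sum_i c_i=0$, the two proofs coincide.

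The difference is in the propagation step. The paper substitutes the simplex hypothesis to remove the surface term and reads $\partial_t\Phi+\vu\cdot\nabla\Phi=0$ as propagation. Strictly, this only shows that the dynamics is tangent to the constraint set, and the uniqueness step is left implicit. The discrete counterpart, Proposition~\ref{prop:simplex-discrete}, needs no such step: it is a one-step induction made rigorous by invertibility of the implicit operator.

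You instead use the fact that $\sum_i M_i\mu_i=-\tfrac34\varepsilon M_0\nabla^2 s$ holds for every $\vc$. The paper remarks on this but does not use it in the proof. With it, the defect $s=\Phi-1$ obeys a closed linear advection--biharmonic equation. Your $L^2$ energy estimate then gives genuine continuous-time propagation for sufficiently regular solutions. It also gives a stability result the paper's argument lacks: $\|s(t)\|_{L^2}$ is nonincreasing for an arbitrary initial defect, so a small violation is not amplified by the continuous dynamics.

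Your computation there is correct, and the boundary bookkeeping is simpler than you expect. From the same unconditional identity, $\partial_n\nabla^2 s=-\tfrac{4}{3\varepsilon M_0}\sum_i M_i\,\partial_n\mu_i=0$ directly, with no chain rule on $N_i$.

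For the semi-discrete system, uniqueness for the linear ODE satisfied by $s$, with $s(0)=0$, already suffices. An energy argument there does not go through ``verbatim'': it also needs a discrete summation-by-parts property of the advective operator. That property holds for centred linear interpolation with $\nabla_h\cdot\vu=0$ and $\vu=\bs{0}$ at the walls, but it is an extra fact to check.
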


The proof is given in \Cref{app:proof}. Two aspects of it are worth isolating.
First, \cref{eq:consistency} is unconditional: it holds pointwise for
arbitrary $\vc$, whether or not the simplex constraint is currently satisfied,
because it follows from the identity $\sum_i \Sigma_i^{-1} = 3/\Sigma_T$ alone.
Second, the surface term of \cref{eq:mu} contributes
$-\tfrac{3}{4}\varepsilon M_0 \nabla^2 c_i$ once $M_i\Sigma_i = M_0$, so its
coefficient is species-independent. The cancellation therefore depends neither
on the solution being smooth nor on the constraint holding beforehand, which is
what allows a discretization to inherit it
(\Cref{prop:simplex-discrete}).

The practical content is that $\sum_i c_i = 1$ is a structural identity of the
model and of its discretization, not a condition the solver is nudged toward by
penalty or projection. In the reference ensemble the measured drift is
$\max_{\vx,t}\big|\sum_i c_i - 1\big| = 2.1\times10^{-4}$, at the floor of
single-precision accumulation. Because the training data satisfies the
constraint to round-off, any simplex violation exhibited by a surrogate trained
on that data is provably its own and is therefore an unambiguous diagnostic of
the learned model. Benchmarks whose reference data merely approximates a
constraint cannot make this separation.

\subsection{Nondimensionalization and parameters}
\label{sec:nondim}

\Cref{eq:ch,eq:mom,eq:bc} specify the model; what remains is to reduce it to
the finite set of numbers the ensemble varies. All quantities are reported in
computational units in which the channel width, the water density and gravity
are the references, $L_x = 1$, $\rho_w = 1$ and $g = 0.98$, so that
$\Omega = (0,1)\times(0,2.5)$ and the horizon is $t\in[0,4]$, with the bubble
diameter fixed at $d = 0.25\,L_x$ and the undisturbed water--oil interface at
$y_{\mathrm{int}} = 0.9\,L_x$. The velocity scale is built on the bubble
rather than the channel, since it is the bubble that sets every time scale in
the problem,
\begin{equation}
U = \sqrt{g\,d} = 0.4950 ,
\label{eq:Uscale}
\end{equation}
so that $t = 4$ amounts to $\approx 7.9$ convective times $d/U$, long enough
for rise, breakthrough, entrainment and detachment, and short enough that no
member reaches the lid.

Sampling the three tensions independently would waste much of the box on
inadmissible triples and would distort the design near the total-wetting
boundary, so we instead set the water-air tension by an E\"otv\"os number
built on $d$ and the other two by dimensionless ratios,
\begin{equation}
\mathrm{Eo} = \frac{\rho_w\, g\, d^2}{\sigma_{aw}},
\qquad
\sigma_{ow} = r_{ow}\,\sigma_{aw},
\qquad
\sigma_{ao} = \big(1 + s - r_{ow}\big)\,\sigma_{aw} ,
\label{eq:tensions}
\end{equation}
which on substitution into \cref{eq:spreading} gives the spreading
coefficients in closed form,
\begin{equation}
\frac{\Sigma_w}{\sigma_{aw}} = 2r_{ow} - s ,
\qquad
\frac{\Sigma_o}{\sigma_{aw}} = s ,
\qquad
\frac{\Sigma_a}{\sigma_{aw}} = 2 + s - 2r_{ow} .
\label{eq:Sigma-closed}
\end{equation}
This exposes the meaning of the third parameter, since $s$ \emph{is} the
normalized oil spreading coefficient, with $s\to0^{+}$ the total-wetting limit
and larger $s$ pushing the triple junction toward a symmetric Neumann
triangle. Over the sampling box the three ratios lie in $[0.10,1.38]$,
$[0.02,0.50]$ and $[0.62,1.90]$, all strictly positive, so every design point
is admissible and no rejection is required, and with $g$ and $d$ fixed the
range $\mathrm{Eo}\in[5,40]$ corresponds to
$\sigma_{aw} \in [1.53\times10^{-3},\,1.23\times10^{-2}]$.

Viscosities are set on the same scales by a Reynolds number, with the oil
viscosity a free ratio and the air viscosity fixed by a constant kinematic
ratio,
\begin{equation}
\mathrm{Re} = \frac{\rho_w U d}{\eta_w},
\qquad
\eta_o = (\eta_o/\eta_w)\,\eta_w ,
\qquad
\frac{\nu_a}{\nu_w} = 2 \;\;(\text{fixed}) ,
\label{eq:viscosities}
\end{equation}
where tying $\eta_a$ to the air density reflects a physical link and avoids
tightening the explicit viscous time-step limit without adding physics; the
range $\mathrm{Re}\in[20,70]$ corresponds to
$\eta_w \in [1.77\times10^{-3},\,6.19\times10^{-3}]$. One group is
deliberately held fixed: the Cahn number
$\mathrm{Cn} = \varepsilon/L_x = 3.125\times10^{-2}$ gives
$\varepsilon/d = 0.125$, so the diffuse layer is one eighth of a bubble
diameter, thin enough to resolve the physics and thick enough to be resolved
by the grid, and varying it would make the ensemble a study of discretization
error as much as of physics. The nine quantities actually varied are
$\mathrm{Eo}$, $\mathrm{Re}$, $s$, $r_{ow}$, the oil-to-water and air-to-water
density ratios, the oil-to-water viscosity ratio, and the two coordinates of
the initial bubble centre.

\section{Operator-Learning Formulation}
\label{sec:opnet}

This section sets up the learning problem and then isolates the one degree of
freedom that the paper actually varies. \Cref{sec:opnet:operator} defines the
target operator and records the two structural properties, a constrained
codomain and a shared underlying geometry, that distinguish it from the
single-field benchmarks on which trunk embeddings have previously been tested.
\Cref{sec:opnet:factorization} gives the branch-trunk factorization for a
five-channel output, states what a rank-$p$ factorization can and cannot
represent, and imposes the simplex constraint structurally.
\Cref{sec:embeddings} defines the three trunk embeddings, which constitute the
entire difference between DeepONet, FEDONet and SEDONet, and
\Cref{sec:opnet:compare} explains in terms of feature conditioning why that
difference should be expected to matter at all. The remaining subsections give
the architectures and their parameter accounting, the training protocol, and
one training algorithm per method.

\subsection{The parametric solution operator}
\label{sec:opnet:operator}

Let $\vp \in \mathcal{P}\subset\R^{9}$ be the vector of physical parameters of
\Cref{sec:nondim}, so that $\mathcal{P}$ is the compact nine-dimensional box
defined by their sampling ranges, and let
$\vzeta = (x,y,t) \in \Omega\times[0,T]$ with
$\Omega=(0,L_x)\times(0,L_y)$, $L_x=1$, $L_y=2.5$ and $T=4$. The map we learn
is the parametric solution operator
\begin{equation}
\Gop:\;\vp \;\longmapsto\;
\bs{s}(\cdot\,;\vp) = \big(c_w, c_o, c_a, u, v\big)(\vzeta;\vp),
\qquad
\Gop: \mathcal{P} \to \mathcal{S}, \quad
\mathcal{S} \subset C\big(\overline\Omega\times[0,T];\,\R^{5}\big),
\label{eq:operator}
\end{equation}
whose codomain is a space of functions rather than a finite-dimensional vector
space: $\Gop$ sends each of the nine numbers to an entire space--time field,
and it is this that makes the problem operator learning rather than
regression. What the surrogate must learn is not a solution but the map that
produces solutions.

For \cref{eq:operator} to be a legitimate learning target, $\Gop$ must be
single-valued and continuous on $\mathcal{P}$. Single-valuedness is not
automatic in multiphase flow, since a topological bifurcation, a member in
which the bubble fails to cross, or in which the plume pinches instead of
persisting, would make arbitrarily close parameters produce qualitatively
different fields, and no continuous operator could interpolate between them.
Diagnostics computed over the ensemble rule this out: every member follows the
same qualitative sequence, and the scalar responses vary smoothly and
monotonically with the dominant parameter. Within the sampled box $\Gop$ is
therefore a continuous map from a compact set into
$C(\overline\Omega\times[0,T];\R^5)$, and the operator universal approximation
theorem \cite{chenchen1995}, in its branch-trunk form \cite{lu2021},
guarantees that for any $\epsilon>0$ there exist networks and a latent width
$p$ achieving
$\sup_{\vp\in\mathcal{P}}\|\Gop(\vp) - \widehat{\Gop}(\vp)\|_\infty < \epsilon$.
The theorem is an existence statement and says nothing about $p$, about the
number of samples, or about whether gradient descent will find the
approximation; the entire empirical content of this paper concerns the last of
these.

Two further properties of $\mathcal{S}$ are worth recording before any
architecture is chosen, because both have architectural consequences. The
first is that the image of $\Gop$ does not fill $\R^5$ pointwise: by
\cref{eq:simplex} the first three components lie on the two-dimensional affine
slice $\Pi = \{\bs{c}\in\R^3: c_w+c_o+c_a=1\}$, so the effective pointwise
output dimension is four rather than five, and any surrogate whose codomain is
all of $\R^5$ is by construction able to leave the physically admissible set.
The second is that the five components are not independent observables of
independent physics but five functionals of one moving geometry: the same
interface that separates $c_w$ from $c_o$ carries the capillary jump that
structures $v$, and the same bubble rim that gives $c_a$ its high-wavenumber
content drives the recirculation that appears in $u$.
\Cref{sec:opnet:factorization} takes up both.

One specialization of the general setting \cite{lu2021,kovachki2023} is worth
noting, because it sharpens the test. There the branch consumes sensor samples
$[u_0(x_1),\dots,u_0(x_m)]$ of an input function $u_0$, and the total
error decomposes into the finite-sensor discretization of $u_0$, the
representation error of the branch--trunk product, and the optimization error.
Here the input function space is the nine-dimensional parametric family, so
the branch consumes $\vp$ directly with $m=9$ and the first term is closed
exactly: the nine numbers determine the simulation bit for bit. Whatever error
is measured in \Cref{sec:results} is therefore representation error plus
optimization error, and a controlled change in the trunk acts on it directly
rather than competing with a sensor-resolution term. Nothing else in the
factorization depends on the distinction, the branch reads a
finite-dimensional encoding exactly as it would read a sensor vector, and the
trunk, where all three architectures differ, is untouched by it.

\subsection{Branch-Trunk factorization and the simplex closure}
\label{sec:opnet:factorization}

We use a shared trunk basis with per-channel branch coefficients. The branch
network $\Bnet$ maps $\vp$ to $5p$ latent coefficients, reshaped as
$b^{(c)}_k(\vp)$ for channel $c\in\{1,\dots,5\}$ and mode $k\in\{1,\dots,p\}$,
while the trunk network $\Tnet$ maps the embedded coordinate $\phi(\vzeta)$ to
$p$ latent basis channels $t_k$, so that the prediction is
\begin{equation}
\widehat{s}_c(\vzeta;\vp)
\;=\; \sum_{k=1}^{p} b^{(c)}_{k}(\vp)\; t_{k}\big(\phi(\vzeta)\big)
\;+\; b^{(c)}_{0} ,
\qquad
t_k = \big[\Tnet\circ\phi\big]_k ,
\quad
b^{(c)}_k = \big[\Bnet(\vp)\big]_{(c,k)} ,
\label{eq:donet}
\end{equation}
with $b^{(c)}_0$ a per-channel bias \cite{lu2021}. The two networks never
interact except through the sum in \cref{eq:donet}: the branch never sees a
coordinate and the trunk never sees a parameter. This separation is what makes
the architecture an operator approximator rather than a function approximator
on $\mathcal{P}\times\Omega\times[0,T]$, and it is also what makes the trunk a
clean experimental target, since any change to $\phi$ alters the coordinate
representation without touching the parameter encoding. Written for a batch of query points the structure becomes explicit. Let
$Z = \{\vzeta^{q}\}_{q=1}^{Q}$ be the query grid, let
$\bs{T}(Z)\in\R^{Q\times p}$ have entries $T_{qk} = t_k(\phi(\vzeta^q))$, and
let $\bs{B}(\vp)\in\R^{5\times p}$ collect the branch coefficients; then
\begin{equation}
\widehat{\bs{S}}(\vp) \;=\; \bs{T}(Z)\,\bs{B}(\vp)^{\!\top}
\;+\; \bs{1}_Q\, \bs{b}_0^{\!\top}
\;\in\; \R^{Q\times 5} ,
\label{eq:donet-matrix}
\end{equation}
a separated representation of rank at most $p$ in which the $Q\times p$ factor
is shared across every parameter value and every channel and the $5\times p$
factor carries all parameter dependence. This is the structure of a
reduced-basis or proper-orthogonal-decomposition expansion, with the single
difference that the basis $\bs{T}(Z)$ is learned by gradient descent rather
than computed from a snapshot matrix.

That observation carries a genuine limitation, and it is worth stating
explicitly because it bounds what any DeepONet variant can achieve here. Since
\cref{eq:donet-matrix} is linear in the trunk basis, the best achievable error
is bounded below by the error of the best $p$-dimensional linear subspace of
the solution manifold, the Kolmogorov $p$-width. Writing
$\bs{S}\in\R^{Q\times 5N}$ for the snapshot matrix of the training ensemble
with singular values $\varsigma_1\ge\varsigma_2\ge\cdots$, the best rank-$p$
approximation in the Frobenius norm satisfies
\begin{equation}
\min_{\mathrm{rank}(\widehat{\bs S})\le p}
\big\|\bs{S}-\widehat{\bs{S}}\big\|_F^2 = \sum_{j>p}\varsigma_j^2 ,
\label{eq:eckart}
\end{equation}
so no choice of $\phi$, and no amount of training, can push the error of a
rank-$p$ model below that tail. For advection-dominated problems with moving
interfaces the singular values decay slowly, because a translating sharp
feature is not well approximated by any fixed linear combination of fixed
spatial modes, and this is the standard obstruction to linear reduced-order
modeling of transport. It is the most likely explanation for the accuracy
floor all three models reach in \Cref{sec:results}, and it locates the trunk
embedding correctly: $\phi$ does not change what $\bs{T}$ is capable of
representing, only how quickly and stably gradient descent finds a good
$\bs{T}$. Breaking that barrier would require a nonlinear parametrization of
the basis, which we note as future work in \Cref{sec:conclusion} rather than
pursue here.

Within that bound, a single trunk is shared by all five channels deliberately.
One trunk per channel would multiply the trunk parameters by five and would
force each channel to rediscover the same interface geometry from its own
supervision alone. Under \cref{eq:donet-matrix} the five channels instead
share one set of spatiotemporal basis functions and differ only in the
coefficients with which they combine them, which is the correct encoding of
the second property recorded above. The price is that a basis good for one
channel must be good for all, so a dictionary that helped the phase fields at
the expense of the velocities would show up immediately; the per-channel
breakdown of \Cref{tab:test} is the test of exactly this.

The first property is handled at the output rather than in the basis. Because
\cref{eq:simplex} holds exactly in the data (\Cref{prop:simplex}), it would be
perverse to let the surrogate violate it, and since the constraint is affine it
can be imposed structurally at no cost. We predict two concentration channels
and close the third,
\begin{equation}
\widehat{c}_a \;=\; 1 - \widehat{c}_w - \widehat{c}_o ,
\label{eq:closure}
\end{equation}
a section of the constraint map that parametrizes $\Pi$ by two free
coordinates; the alternative, predicting all three and projecting orthogonally
onto $\Pi$, distributes the residual equally and is the minimum-norm
correction. Either way the constraint holds to floating-point rounding, so the
simplex violation we report is a machine-epsilon quantity rather than a
modelling error. One implementation point is worth stating because getting it
wrong fails silently: since each channel is standardized to zero mean and unit
variance for training, \cref{eq:closure} is
not the identity
$\widehat{c}_a^{\,\mathrm{std}} = 1 - \widehat{c}_w^{\,\mathrm{std}} -
\widehat{c}_o^{\,\mathrm{std}}$ in standardized variables. Writing
$c_i = \mu_i + s_i c_i^{\mathrm{std}}$ with per-channel mean $\mu_i$ and scale
$s_i$, the closure becomes
\begin{equation}
\widehat{c}_a^{\,\mathrm{std}}
= \frac{1-\mu_w-\mu_o-\mu_a}{s_a}
\;-\; \frac{s_w}{s_a}\,\widehat{c}_w^{\,\mathrm{std}}
\;-\; \frac{s_o}{s_a}\,\widehat{c}_o^{\,\mathrm{std}} ,
\label{eq:closure-std}
\end{equation}
an affine map with channel-specific coefficients; the unweighted form leaves an
$\mathcal{O}(1)$ residual while looking entirely correct in code.

The closure also reshapes the supervision, and it is worth working out how.
The loss is still evaluated on all three concentration channels, so no phase is
privileged, but the three errors are no longer independent. Let
$e_w = \widehat c_w - c_w$ and $e_o = \widehat c_o - c_o$; under
\cref{eq:closure}, and using $\sum_i c_i = 1$ in the data, the third error is
determined, $e_a = -(e_w+e_o)$, and the three-channel squared error collapses
to a quadratic form in two variables,
\begin{equation}
e_w^2 + e_o^2 + e_a^2
= 2\,\bs{e}^{\!\top} A\, \bs{e},
\qquad
\bs{e} = (e_w,e_o)^{\!\top},
\qquad
A = \begin{pmatrix} 1 & \tfrac12 \\[2pt] \tfrac12 & 1 \end{pmatrix} ,
\label{eq:closure-metric}
\end{equation}
with eigenvalues $\tfrac32$ along $(1,1)/\sqrt2$ and $\tfrac12$ along
$(1,-1)/\sqrt2$. The closure therefore penalizes co-directional errors in water
and oil, which are exactly errors in the total liquid fraction, that is, in
the placement of the air interface, three times more heavily than exchange
errors, in which water is mistaken for oil at fixed total liquid. For the
present problem this is a favorable reweighting, since the bubble boundary is
the hardest and most consequential feature in the field, and the closure
upweights it without a hand-tuned channel weight; the induced metric has
condition number $3$, small enough that no optimization penalty accompanies it.
The asymmetry is a property of the section rather than of the constraint: under
the orthogonal projection, the penalty would be isotropic on $\Pi$.

\subsection{Trunk embeddings: the only difference between the three models}
\label{sec:embeddings}

All three architectures use \cref{eq:donet} verbatim, with identical branch
network, identical trunk depth and width, identical latent width $p$, identical
loss, optimizer, schedule and data. They differ only in the fixed,
non-trainable map $\phi$ that preprocesses the query coordinate before it
reaches the first trunk layer. This is the experimental control on which the
entire paper rests, so we state each $\phi$ explicitly and in full. Every embedding acts on affinely rescaled coordinates rather than on physical
ones, so that the three axes, whose physical extents $L_x=1$, $L_y=2.5$ and
$T=4$ differ by a factor of four, are treated comparably:
\begin{equation}
\xi_x = \frac{2x}{L_x}-1,\qquad
\xi_y = \frac{2y}{L_y}-1,\qquad
\xi_t = \frac{2t}{T}-1,
\qquad \bs{\xi}=(\xi_x,\xi_y,\xi_t) \in [-1,1]^3 .
\label{eq:affine}
\end{equation}
For the Chebyshev dictionary this is a requirement rather than a convention,
since $\{T_n\}$ is orthogonal on $[-1,1]$ and nowhere else. For the Fourier
features it is a fairness condition: applying a single isotropic bandwidth to
unscaled $(x,y,t)$ would give the temporal axis four times the effective
frequency content of the horizontal one, so the two embeddings would no longer
be compared on equal footing.

\subsubsection{DeepONet: Raw coordinates}
The baseline applies no embedding at all,
\begin{equation}
\phi_{\mathrm{raw}}(\vzeta) = \bs{\xi} \in \R^{3},
\qquad d_{\mathrm{trunk}} = 3 ,
\label{eq:phi-raw}
\end{equation}
so that every basis function in $\bs{T}(Z)$, every interface profile, every
oscillation, every sharp gradient, must be synthesized from compositions of
affine maps and pointwise nonlinearities acting on three numbers. This is
exactly the configuration to which the spectral-bias analysis of
\cite{rahaman2019,xu2020frequency,jacot2018ntk} applies: the neural tangent
kernel of a shallow MLP on raw coordinates has an eigenspectrum dominated by
low-frequency eigenfunctions, so high-wavenumber content is learned last, if at
all. For a phase-field target this is the worst possible allocation of prior
mass, since the bulk of the solution is nearly constant and carries almost no
information while everything of interest lives in the thin diffuse layers.

\subsubsection{FEDONet: Random Fourier features}

The first remedy lifts the coordinate into a sinusoidal dictionary with random
frequencies \cite{sojitra2025,tancik2020},
\begin{equation}
\phi_{\mathrm{F}}(\vzeta)
= \Big[\sin\!\big(2\pi B\bs{\xi}\big),\; \cos\!\big(2\pi B\bs{\xi}\big)\Big]
\in \R^{2M},
\qquad
B \in \R^{M\times 3},\quad B_{ij}\sim\mathcal{N}\!\big(0,\sigma_B^2\big),
\label{eq:phi-fourier}
\end{equation}
with $M = 128$, hence $d_{\mathrm{trunk}} = 2M = 256$. The matrix $B$ acts on
the normalized coordinates of \cref{eq:affine}, is drawn once at a fixed
bandwidth and frozen thereafter: it is not trained and does not appear in
$\theta$, so the embedding contributes no parameters and the comparison with
SEDONet remains exact. The mechanism is best seen through the kernel that \cref{eq:phi-fourier}
induces. By Bochner's theorem and the random-features construction
\cite{rahimi2007}, the inner product of two embedded coordinates is a Monte
Carlo estimate of the Fourier transform of the sampling density,
\begin{equation}
\frac{1}{M}\,\phi_{\mathrm{F}}(\vzeta)^{\!\top}\phi_{\mathrm{F}}(\vzeta')
= \frac{1}{M}\sum_{j=1}^{M}\cos\!\big(2\pi\, \bs{b}_j^{\!\top}(\bs{\xi}-\bs{\xi}')\big)
\;\xrightarrow[M\to\infty]{}\;
\exp\!\Big(-2\pi^2\sigma_B^2\,\big\|\bs{\xi}-\bs{\xi}'\big\|^2\Big) ,
\label{eq:rff-kernel}
\end{equation}
a Gaussian kernel of length scale $(2\pi\sigma_B)^{-1}$. Three properties
follow that matter here. The kernel is stationary, depending on
$\bs{\xi}-\bs{\xi}'$ alone, so the prior is blind to the boundary of the box
and is identical at the walls and in the interior. It is isotropic at
fixed $\sigma_B$, so one scalar sets the resolvable scale in all three
coordinates, a strong assumption for a field whose structure in $t$ is a
monotone drift and whose structure in $x$ is a thin neck. And it is built from
functions periodic on the embedding scale, so a non-periodic component of the
target, such as a linear ramp in $\xi_t$, must be represented by a slowly
converging sinusoidal series with Gibbs oscillations at the endpoints.

\subsubsection{SEDONet: Chebyshev dictionary}

The second remedy replaces that sinusoidal prior with a polynomial one. Let
$\{T_n\}_{n\ge0}$ be the Chebyshev polynomials of the first kind, generated by
the three-term recurrence
\begin{equation}
T_0(\xi) = 1,\quad T_1(\xi)=\xi,\quad
T_{n+1}(\xi) = 2\xi\,T_n(\xi) - T_{n-1}(\xi),
\label{eq:cheb-rec}
\end{equation}
orthogonal on $[-1,1]$ under the weight $w(\xi)=(1-\xi^2)^{-1/2}$ and
satisfying $T_n(\cos\vartheta)=\cos(n\vartheta)$. That last identity is worth
keeping in view: a Chebyshev expansion is a cosine series in a stretched
variable, so the dictionary is not an alternative to a Fourier basis so much as
a Fourier basis on a coordinate that clusters near the endpoints and imposes no
periodicity. The three-dimensional tensor-product dictionary indexed by the
multi-index $\bs{\alpha}=(k,l,m)$ is
\begin{equation}
\Phi_{\bs{\alpha}}(\vzeta) \;=\; T_k(\xi_x)\,T_l(\xi_y)\,T_m(\xi_t),
\qquad 0 \le k,l,m < K ,
\label{eq:cheb-tensor}
\end{equation}
and the embedding is a truncation of these $K^3$ modes to an index set
$\mathcal{A}$ of cardinality $d_{\mathrm{trunk}}$,
\begin{equation}
\phi_{\mathrm{C}}(\vzeta) = \big[\,\Phi_{\bs{\alpha}}(\vzeta)\,\big]_{\bs{\alpha}\in\mathcal{A}}
\in \R^{d_{\mathrm{trunk}}} .
\label{eq:cheb-embed}
\end{equation}
Like $B$ in \cref{eq:phi-fourier}, $\mathcal{A}$ is fixed before training and
carries no parameters; unlike $B$, it involves no random draw, so
$\phi_{\mathrm{C}}$ is bit-identical across seeds and requires no bandwidth
selection. The choice of $\mathcal{A}$ deserves attention, because the obvious
choice fails in three coordinates. In one spatial dimension plus time
\cite{abid2026} uses a lexicographic crop, which is harmless when almost every
mode is retained. In three coordinates it is not: cropping the
lexicographically ordered $K^3$ modes to the first
$d_{\mathrm{trunk}} \ll K^3$ entries retains
\begin{equation}
\mathcal{A}_{\mathrm{lex}}
= \Big\{(k,l,m) : k < \big\lceil d_{\mathrm{trunk}}/K^2\big\rceil\Big\}
\quad\Longrightarrow\quad
k \in \{0,1,2\} \ \text{ for } K=10,\; d_{\mathrm{trunk}}=256 ,
\label{eq:lex}
\end{equation}
that is, full degree in the last index and degree at most two in the first.
Such a dictionary is nearly blind to $x$-structure above quadratic order, which
is fatal on a problem whose hardest feature, the plume neck, is a thin
structure in $x$. We therefore order modes by total degree,
\begin{equation}
\mathcal{A}_{\mathrm{graded}}:\quad
\Phi_{\bs{\alpha}} \ \text{retained iff} \
\mathrm{rank}_{\preceq}(\bs{\alpha}) < d_{\mathrm{trunk}},
\qquad
\bs{\alpha} \preceq \bs{\alpha}' \iff |\bs{\alpha}| < |\bs{\alpha}'| ,
\quad |\bs{\alpha}| = k+l+m ,
\label{eq:graded}
\end{equation}
with ties broken lexicographically and the bound $\max(k,l,m)<K$ applied first.
Since the number of multi-indices of total degree at most $D$ in three
variables is
\begin{equation}
\#\big\{\bs{\alpha}\in\mathbb{Z}_{\ge0}^3 : |\bs{\alpha}| \le D\big\}
= \binom{D+3}{3},
\qquad
\#\big\{\bs{\alpha} : |\bs{\alpha}| = D\big\} = \binom{D+2}{2} ,
\label{eq:graded-count}
\end{equation}
the budget $d_{\mathrm{trunk}}=256$ with $K=10$ retains every mode of total
degree at most $9$, all $\binom{12}{3}=220$ of them, plus $36$ modes of degree
exactly $10$. The resulting dictionary is degree-balanced across $x$, $y$ and
$t$ by construction, and \cref{eq:graded-count} makes the feature budget easy
to choose in advance.
\subsection{Why the embedding changes the optimization}
\label{sec:opnet:compare}
Each $\phi$ encodes a prior about what solution fields look like, and the three
priors are genuinely different rather than three parametrizations of one thing.
The cleanest way to see why they should affect training, rather than only
expressivity, is to examine the second-order statistics of the features the
first trunk layer receives. Define the empirical feature Gram matrix over the
query set,
\begin{equation}
G \;=\; \frac{1}{Q}\sum_{q=1}^{Q}
    \phi(\vzeta^{q})\,\phi(\vzeta^{q})^{\!\top}
\;\in\; \R^{d_{\mathrm{trunk}}\times d_{\mathrm{trunk}}} .
\label{eq:gram}
\end{equation}
For the idealized linear model $\widehat s = \bs{w}^{\!\top}\phi(\vzeta)$
trained by gradient descent with step $\eta$ on a squared loss, the error
component along the $i$th eigenvector of $G$ contracts by a factor
$(1-\eta\lambda_i)$ per step, so directions with small $\lambda_i$ are learned
last and the steps needed to resolve a direction scale as $\lambda_i^{-1}$. The
conditioning of $G$ \emph{is} the learning-rate spectrum, and spectral bias is
the statement that for a raw-coordinate MLP the relevant $\lambda_i$ are tiny.
The trunk is a nonlinear network rather than a linear model, so this is a
heuristic rather than a theorem, but it is the same heuristic that underlies
the neural tangent kernel analyses of
\cite{jacot2018ntk,wang2021eigenvector} and it predicts the observed ordering
correctly.

Under this lens the three embeddings separate sharply. The raw trunk offers
almost nothing for $G$ to condition, so every useful feature direction must be
manufactured by the network itself, precisely the regime in which the NTK
eigenspectrum decays fastest. The random Fourier embedding whitens $G$ in
expectation, since distinct frequency rows are uncorrelated and
$\mathbb{E}[\sin^2]=\mathbb{E}[\cos^2]=\tfrac12$ give
$\mathbb{E}[G] = \tfrac12 I$, but only in expectation and only at a well-chosen
bandwidth: as $\sigma_B\to0$ the features collapse toward rank one and
$\kappa(G)$ diverges, while at large $\sigma_B$ they decorrelate faster than the
target varies and the model must fit high-frequency noise. Bandwidth tuning in
FEDONet is, in this language, conditioning repair.

The Chebyshev dictionary needs no such repair, and the reason is exact rather
than statistical. Under the weight $w(\xi)=(1-\xi^2)^{-1/2}$ that makes
$\{T_n\}$ orthogonal, $G$ is diagonal with a condition number independent of
$d_{\mathrm{trunk}}$, of $K$ and of any random draw. The query set is sampled
uniformly rather than under that weight, so the realized conditioning is weaker
than the orthogonal figure, and this is a property of the sampling measure that
grid refinement does not remove (\Cref{app:cheb}). What carries the comparison
is therefore not the magnitude of $\kappa(G)$ but its character. The polynomial
dictionary is not uniformly better conditioned, at a well-chosen bandwidth
the Fourier features are in fact better, and at a poor one they are singular to
single precision, the realized $\kappa(G)$ spanning many orders of magnitude
across plausible $\sigma_B$ (\Cref{app:rff}). It is instead unconditionally conditioned: a fixed number, known before training,
identical across seeds, with no hyperparameter to select. The whitening that
\cite{sojitra2025} obtains in expectation and at a tuned bandwidth, the
polynomial dictionary obtains deterministically.

Alongside conditioning, the three embeddings differ in what they represent
cheaply. A Chebyshev dictionary represents a linear trend exactly in two modes;
a random Fourier dictionary must build the same trend from periodic functions
and pays the Gibbs penalty at the interval endpoints; a raw trunk must build
it, like everything else, from nonlinearities. On a problem whose leading
behaviour in $\xi_y$ and $\xi_t$ is a monotone drift, the bubble rising, the
plume tip advancing, the entrained volume growing, this difference is not
incidental, and the per-channel evidence of \Cref{sec:results} identifies it as
the operative mechanism. \Cref{tab:embed-compare} summarizes the comparison.

\begin{table}[ht!]
\centering
\small
\setlength{\tabcolsep}{4pt}
\renewcommand{\arraystretch}{1.3}
\caption{The three trunk embeddings side by side. Everything not listed
(branch, trunk depth and width, latent width $p$, loss, optimizer, schedule
and data split) is identical across the three models.}
\label{tab:embed-compare}
\begin{tabularx}{\textwidth}{@{}l L{0.60} L{1.15} L{1.25}@{}}
\toprule
 & \textbf{DeepONet} & \textbf{FEDONet} & \textbf{SEDONet} \\
\midrule
Embedding $\phi$
  & $\bs{\xi}$
  & $[\sin,\cos](2\pi B\bs{\xi})$
  & $\{T_k T_l T_m\}_{\mathcal{A}}$ \\
Implied prior
  & none (spectral bias)
  & periodic, translation invariant
  & non-periodic, boundary emphasized \\
Represents a linear trend
  & via nonlinearities
  & slowly converging series
  & exactly, in two modes \\
\bottomrule
\end{tabularx}
\end{table}
\subsection{Architectures and parameter accounting}
\label{sec:opnet:arch}

\Cref{fig:arch} presents the three architectures in one layout, drawn so that
the single point of difference, the block on the trunk input path, is the
only thing that changes between them. The branch is
$9\to512\to512\to5p$ and the trunk $d_{\mathrm{trunk}}\to512\to512\to p$ with
$p=64$ latent basis channels and GELU activations \cite{hendrycks2016gelu},
identical across the three models except for $d_{\mathrm{trunk}}$.

\begin{figure}[H]
\centering
\includegraphics[width=\textwidth]{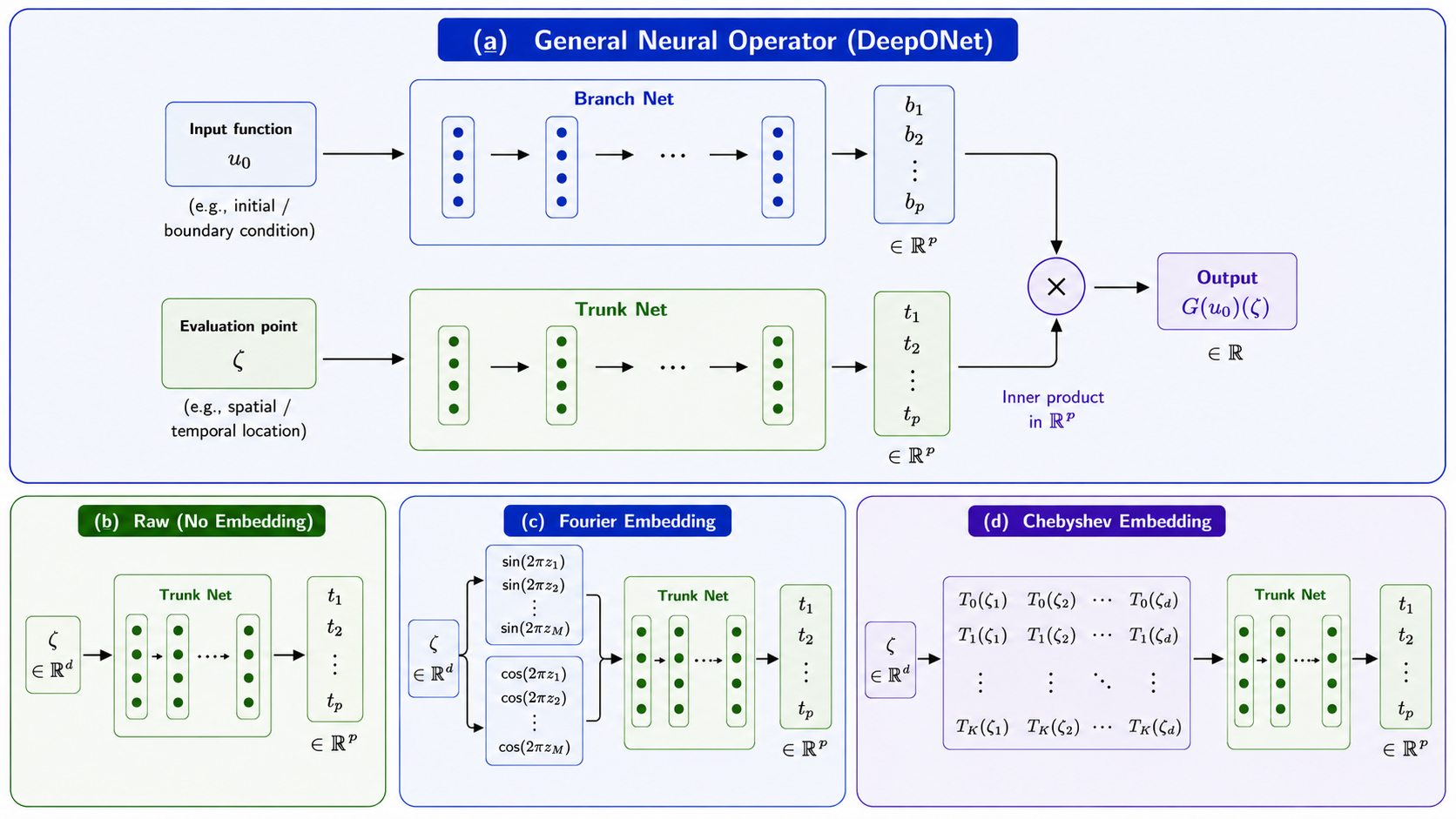}
\caption{The three architectures differ only in the trunk coordinate
embedding. \textbf{(a)} The branch-trunk factorization of \cref{eq:donet}:
the branch encodes the input, the trunk encodes the query coordinate after a
fixed embedding $\phi$, and the two are combined by an inner product.
\textbf{(b)-(d)} The three fillings of the $\phi$ slot, raw rescaled
coordinates, random Fourier features, and a deterministic Chebyshev dictionary, which constitute the entire experimental manipulation. The first trunk
layer has the same width in both (c) and (d), so FEDONet and SEDONet are
parameter-identical, and no block carries trainable parameters.}
\label{fig:arch}
\end{figure}

Since the embeddings carry no trainable parameters, the three models differ
only in the input width of that first trunk layer, giving $0.73$~M parameters
for the raw baseline and $0.86$~M for the two embedded models
(\Cref{tab:arch}). The gap is precisely one weight matrix, about $17.8\%$ of
the baseline. That residual advantage is unavoidable if the embedding width is
to be nontrivial and should be acknowledged rather than explained away, but it
cannot account for the result: FEDONet and SEDONet have identical parameter
counts, depths, widths, optimizers and data, so whatever separates them is the
content of $\phi$ and nothing else. The DeepONet comparison is the weaker of
the two and we report it as such.

\begin{table}[ht!]
\centering
\caption{Model configurations. The three architectures share branch and trunk
depths and widths and differ only in the trunk coordinate embedding; the
parameter gap is exactly the first trunk layer.}
\label{tab:arch}
\begin{tabular}{llcc}
\toprule
Model    & Trunk embedding $\phi$                                    & $d_{\mathrm{trunk}}$ & Parameters \\
\midrule
DeepONet & raw $\bs{\xi}=(\xi_x,\xi_y,\xi_t)$                        & $3$   & $0.73$~M \\
FEDONet  & $[\sin(2\pi B\bs{\xi}),\cos(2\pi B\bs{\xi})]$, $M=128$    & $256$ & $0.86$~M \\
SEDONet  & $T_k(\xi_x)T_l(\xi_y)T_m(\xi_t)$, graded, $K=10$          & $256$ & $0.86$~M \\
\bottomrule
\end{tabular}
\end{table}

Neither embedding changes the asymptotic cost of training. The embedding adds a
wider first trunk layer, a few percent of the per-step cost, and for SEDONet
the dictionary itself is evaluated once for the whole training run or
regenerated on the fly from the recurrence \cref{eq:cheb-rec} at negligible
cost.

\subsection{Training and evaluation protocol}
\label{sec:opnet:protocol}

The ensemble is split at the level of the design member into training,
validation and test sets of $905$, $16$ and $103$ runs, so no test parameter
vector is seen during training. Each of the five channels is standardized to
zero mean and unit variance using training-split statistics, and the closure is
applied in physical units per \cref{eq:closure-std}. The objective is the
channel-averaged mean-squared error over a mini-batch $\mathcal{B}$ of ensemble
members and the query set,
\begin{equation}
\mathcal{L}(\theta)
= \frac{1}{|\mathcal{B}|\,Q}\sum_{i\in\mathcal{B}}\sum_{q=1}^{Q}
  \frac{1}{5}\sum_{c=1}^{5}
  \Big(\widehat{s}_c\big(\vzeta^{q};\vp^{(i)}\big) - s^{(i)}_c(\vzeta^{q})\Big)^{2},
\label{eq:loss}
\end{equation}
evaluated on standardized channels, with the third concentration term acting
through the metric \cref{eq:closure-metric}. Optimization uses Adam
\cite{kingma2015adam} with cosine decay of the learning rate to zero. All three
models are trained under an identical schedule, validation relative $L^2$ is
recorded periodically on the held-out runs, and every reported test number
comes from the final weights rather than a selected checkpoint, so no
comparison depends on where one might have stopped.

The primary metric is the relative $L^2$ error over the full space-time grid,
computed per run and per channel and then averaged over the test runs,
\begin{equation}
\relL^{(c)}
= \frac{\big\|\widehat{s}_c - s_c\big\|_{2}}{\big\|s_c\big\|_{2}},
\qquad
\relL = \frac{1}{5}\sum_{c=1}^{5}\relL^{(c)} ,
\label{eq:relL2}
\end{equation}
with norms taken in physical rather than standardized units so that the numbers
are comparable across channels of different magnitude. This metric is
scale-sensitive by construction: a channel whose norm is small, such as $c_a$,
is judged more harshly for the same absolute error, and the denominator
vanishes identically for the velocity channels at $t=0$, where the initial
condition is at rest, a point we return to in \Cref{sec:res:sample}.
Alongside it we report the simplex violation
\begin{equation}
\mathcal{E}_{\Delta} = \max_{\vx,t}\Big|\sum_i \widehat{c}_i - 1\Big| ,
\label{eq:simplexviol}
\end{equation}
the wall-normal error profile, averaged over $x$, over the snapshot set
$\mathcal{T}$ and over the test runs,
\begin{equation}
e(y) = \frac{1}{|\mathcal{T}|\,N_x}\sum_{t\in\mathcal{T}}\sum_{j}
       \Big|\widehat{c}(x_j,y,t) - c(x_j,y,t)\Big| ,
\label{eq:profile}
\end{equation}
the domain-averaged phase error at each output time, which is
\cref{eq:profile} with the roles of $y$ and $t$ exchanged, and the deviations
of the derived physical diagnostics recomputed from the predicted fields.
\subsection{Algorithms}
\label{sec:opnet:algorithms}
\Cref{alg:train-deeponet,alg:train-fedonet,alg:train-sedonet} give one training
procedure per method. They are written in parallel deliberately: every line is
shared except the embedding block at the top, which is the whole experimental
manipulation. What differs is when the coordinate representation is fixed.
DeepONet has none to fix. FEDONet fixes a random draw at initialization, so its
representation is frozen for the run but differs between runs and depends on a
chosen bandwidth. SEDONet fixes a deterministic dictionary constructed before
training begins, bit-identical across runs and specified by two integers.

\begin{algorithm}[ht!]
\caption{DeepONet --- Raw-coordinate trunk}
\label{alg:train-deeponet}
\begin{algorithmic}[1]
\Require Training ensemble $\mathcal{D}=\{(\vp^{(i)},\bs{s}^{(i)})\}_{i=1}^{N_{\mathrm{tr}}}$;
query set $Z=\{\vzeta^q\}_{q=1}^{Q}$; latent width $p$; learning-rate schedule
$\{\eta_e\}$; epochs $E$
\Ensure Trained $\theta=(\theta_{\mathrm{branch}},\theta_{\mathrm{trunk}})$
\State $\Phi \gets \bs{\xi}(Z) \in \R^{Q\times d_{\mathrm{trunk}}}$
       \Comment{embedding: affine rescaling only, \cref{eq:affine,eq:phi-raw}}
\State Standardize channels using training-split statistics; initialize $\theta$
\For{$e = 1,\dots,E$}
  \For{each mini-batch $\mathcal{M}\subset\mathcal{D}$}
    \State $\bs{B}(\vp^{(i)}) \gets \Bnet(\vp^{(i)}) \in \R^{5\times p}$
           for $i\in\mathcal{M}$ \Comment{branch}
    \State $\bs{T} \gets \Tnet(\Phi) \in \R^{Q\times p}$
           \Comment{trunk; first layer $d_{\mathrm{trunk}}\to512$}
    \State $\widehat{\bs{S}}^{(i)} \gets
           \bs{T}\,\bs{B}(\vp^{(i)})^{\!\top} + \bs{1}\bs{b}_0^{\!\top}$;
           apply the closure \cref{eq:closure-std}
    \State Evaluate $\mathcal{L}(\theta)$ by \cref{eq:loss}; take one Adam step
           with rate $\eta_e$
  \EndFor
  \If{$e \bmod 5 = 0$}
    \State Evaluate validation $\relL$ on the held-out runs
  \EndIf
\EndFor
\end{algorithmic}
\end{algorithm}

\begin{algorithm}[ht!]
\caption{FEDONet --- Random Fourier feature trunk}
\label{alg:train-fedonet}
\begin{algorithmic}[1]
\Require As \Cref{alg:train-deeponet}, plus feature count $M$, bandwidth
$\sigma_B$, embedding seed
\Ensure Trained $\theta$; frozen $B$
\State Draw $B\in\R^{M\times3}$ with $B_{ij}\sim\mathcal{N}(0,\sigma_B^2)$ and
       freeze it \Comment{drawn once; $B\notin\theta$}
\State $\Phi \gets \big[\sin(2\pi\,\bs{\xi}(Z)B^{\!\top}),\;
       \cos(2\pi\,\bs{\xi}(Z)B^{\!\top})\big] \in \R^{Q\times d_{\mathrm{trunk}}}$
       \Comment{embedding, \cref{eq:phi-fourier}}
\State Standardize channels using training-split statistics; initialize $\theta$
\For{$e = 1,\dots,E$}
  \For{each mini-batch $\mathcal{M}\subset\mathcal{D}$}
    \State $\bs{B}(\vp^{(i)}) \gets \Bnet(\vp^{(i)})$ for $i\in\mathcal{M}$
           \Comment{branch}
    \State $\bs{T} \gets \Tnet(\Phi)$
           \Comment{trunk; first layer $d_{\mathrm{trunk}}\to512$}
    \State $\widehat{\bs{S}}^{(i)} \gets
           \bs{T}\,\bs{B}(\vp^{(i)})^{\!\top} + \bs{1}\bs{b}_0^{\!\top}$;
           apply the closure \cref{eq:closure-std}
    \State Evaluate $\mathcal{L}(\theta)$ by \cref{eq:loss}; take one Adam step
           with rate $\eta_e$
  \EndFor
  \If{$e \bmod 5 = 0$}
    \State Evaluate validation $\relL$ on the held-out runs
  \EndIf
\EndFor
\end{algorithmic}
\end{algorithm}

\begin{algorithm}[ht!]
\caption{SEDONet --- Chebyshev spectral trunk}
\label{alg:train-sedonet}
\begin{algorithmic}[1]
\Require As \Cref{alg:train-deeponet}, plus per-axis degree bound $K$ and
feature budget $d_{\mathrm{trunk}}$
\Ensure Trained $\theta$
\State $\mathcal{A} \gets$ the first $d_{\mathrm{trunk}}$ multi-indices
       $(k,l,m)$ with $\max(k,l,m)<K$ in order of increasing total degree
       $k+l+m$, ties broken lexicographically \Comment{\cref{eq:graded}}
\State $\Phi[q,j] \gets T_{k_j}(\xi_x^q)\,T_{l_j}(\xi_y^q)\,T_{m_j}(\xi_t^q)$
       for $q=1,\dots,Q$ and $(k_j,l_j,m_j)\in\mathcal{A}$, each factor from
       the recurrence \cref{eq:cheb-rec}
       \Comment{embedding; no parameters, no random draw}
\State Standardize channels using training-split statistics; initialize $\theta$
\For{$e = 1,\dots,E$}
  \For{each mini-batch $\mathcal{M}\subset\mathcal{D}$}
    \State $\bs{B}(\vp^{(i)}) \gets \Bnet(\vp^{(i)})$ for $i\in\mathcal{M}$
           \Comment{branch}
    \State $\bs{T} \gets \Tnet(\Phi)$
           \Comment{trunk; first layer $d_{\mathrm{trunk}}\to512$}
    \State $\widehat{\bs{S}}^{(i)} \gets
           \bs{T}\,\bs{B}(\vp^{(i)})^{\!\top} + \bs{1}\bs{b}_0^{\!\top}$;
           apply the closure \cref{eq:closure-std}
    \State Evaluate $\mathcal{L}(\theta)$ by \cref{eq:loss}; take one Adam step
           with rate $\eta_e$
  \EndFor
  \If{$e \bmod 5 = 0$}
    \State Evaluate validation $\relL$ on the held-out runs
  \EndIf
\EndFor
\end{algorithmic}
\end{algorithm}

The embedding block of \Cref{alg:train-sedonet} is executed once for the whole
training run and is identical across seeds, so the trunk sees a spectral input
distribution whose conditioning is fixed before training begins rather than
realized from a draw (\Cref{app:cheb:uniform}). This is the one property
\Cref{alg:train-fedonet} cannot share, since its first line depends on a random
draw whose bandwidth must additionally be selected.
\section{Results}
\label{sec:results}

The evidence is reported in increasing order of specificity: what accuracy the
three models reach and how it is distributed over channels and over the
parameter space (\Cref{sec:res:accuracy}); where in space and time the
differences between them live (\Cref{sec:res:localization}); one hard member
examined snapshot by snapshot (\Cref{sec:res:sample}); and whether the
pointwise gains translate into derived physical quantities and constraint
consistency (\Cref{sec:res:physical}). The interpretation is developed
alongside the evidence, because each piece of it discriminates between
different candidate explanations.

\subsection{Test accuracy, by channel and across the design}
\label{sec:res:accuracy}
SEDONet reduces the mean relative $L^2$ error by $24.0\%$ relative to the
raw-trunk baseline and by $16.8\%$ relative to FEDONet, uniformly across
channels. That uniformity is not automatic on a multi-field target: a prior
that helped the phase fields at the expense of the velocities would be a more
equivocal outcome, and the shared trunk of \Cref{sec:opnet:factorization} is
exactly the arrangement that would expose such a trade-off.

The per-channel reductions of \Cref{tab:reductions} are the most informative
result in the paper, and they are invisible in the channel-averaged number. The
Fourier embedding delivers its gain almost entirely on $c_a$ and the velocities, the channels with small-scale, oscillatory, or recirculating structure,
which is exactly where a stationary broadband kernel should help and close
to none of it on the two large-region concentration fields. The Chebyshev
embedding improves those same oscillatory channels by more, and additionally improves $c_w$ and $c_o$ by about a fifth; those two are
precisely the fields dominated by a large connected region bounded by a single
slowly translating level set, that is, by a monotone and strongly non-periodic
trend in $y$ and in $t$.

That split separates the two candidate mechanisms cleanly, and it is the
central interpretive claim of the paper. Broadband resolution, supplying
high-wavenumber modes at initialization instead of requiring the network to
synthesize them, is common to both embeddings and accounts for the
improvement both deliver on $c_a$ and on the velocities. Trend representation
is specific to the polynomial dictionary: a linear ramp is two Chebyshev modes,
whereas a dictionary built from functions periodic on the embedding scale
represents the same ramp as a slowly converging sinusoidal series, spending on
the trend part of a budget the polynomial basis leaves free for the interface
structure.

\begin{table}[ht!]
\centering
\caption{Test relative $L^2$ error over $103$ held-out simulations, and the
maximum simplex violation $\mathcal{E}_{\Delta}$ of \cref{eq:simplexviol}.
Lowest per column in bold.}
\label{tab:test}
\begin{tabular}{lcccccc c}
\toprule
Model    & mean & $c_w$ & $c_o$ & $c_a$ & $u$ & $v$ & $\mathcal{E}_{\Delta}$ \\
\midrule
DeepONet & 0.0821 & 0.0319 & 0.0170 & 0.1331 & 0.1406 & 0.0879 & $1.79\times10^{-7}$ \\
FEDONet  & 0.0750 & 0.0315 & 0.0170 & 0.1236 & 0.1231 & 0.0797 & $1.79\times10^{-7}$ \\
SEDONet  & \textbf{0.0624} & \textbf{0.0251} & \textbf{0.0136} & \textbf{0.1053} & \textbf{0.1037} & \textbf{0.0644} & $1.79\times10^{-7}$ \\
\midrule
\multicolumn{8}{l}{SEDONet vs.\ DeepONet: $-24.0\%$ \quad
                   SEDONet vs.\ FEDONet: $-16.8\%$ \quad
                   channels improved: $5/5$} \\
\bottomrule
\end{tabular}
\end{table}

\begin{table}[ht!]
\centering
\caption{Per-channel error reduction (\%) implied by \Cref{tab:test}. The two
embeddings differ in where their gains fall as well as in magnitude: the
Fourier trunk concentrates its improvement on the oscillatory channels, while
the Chebyshev trunk improves every channel by a comparable fifth to a quarter.}
\label{tab:reductions}
\begin{tabular}{lcccccc}
\toprule
Comparison & mean & $c_w$ & $c_o$ & $c_a$ & $u$ & $v$ \\
\midrule
FEDONet vs.\ DeepONet & $8.6$  & $1.3$  & $0.0$  & $7.1$  & $12.4$ & $9.3$ \\
SEDONet vs.\ DeepONet & $24.0$ & $21.3$ & $20.0$ & $20.9$ & $26.2$ & $26.7$ \\
SEDONet vs.\ FEDONet  & $16.8$ & $20.3$ & $20.0$ & $14.8$ & $15.8$ & $19.2$ \\
\bottomrule
\end{tabular}
\end{table}

\begin{figure}[H]
\centering
\includegraphics[width=0.42\textwidth]{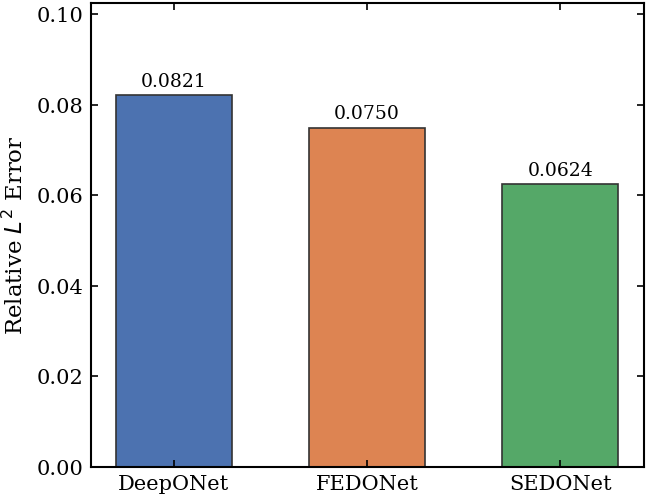}
\hfill
\includegraphics[width=0.55\textwidth]{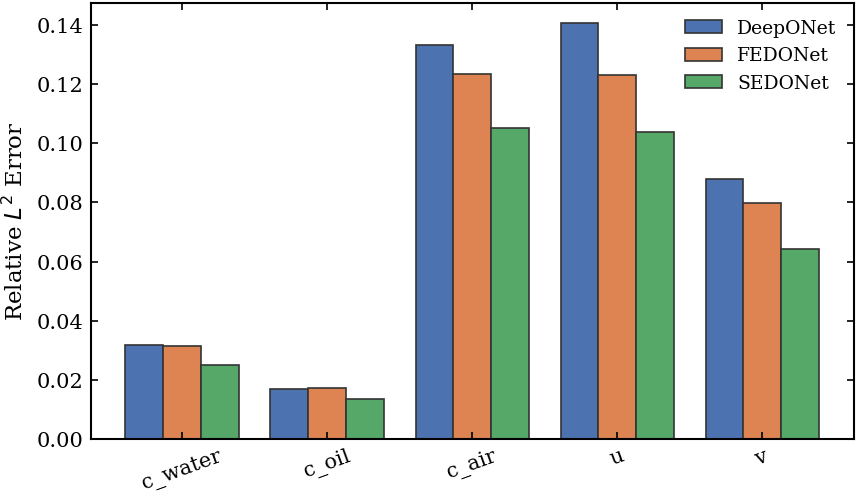}
\caption{Test relative $L^2$ over the $103$ held-out simulations.
\textbf{(a)} Channel-averaged error. \textbf{(b)} Per-channel breakdown. The
ordering SEDONet $<$ FEDONet $<$ DeepONet holds for the mean and for every one
of the five channels, but the margins are strongly channel dependent:
FEDONet is close to the baseline on $c_w$ and $c_o$, whereas SEDONet improves
those two channels by about a fifth.}
\label{fig:accuracy}
\end{figure}

The absolute level of the error is also interpretable. The oil field $c_o$ is
easiest, one large connected region with a single smooth boundary, while $c_a$
and the horizontal velocity $u$ carry the budget: $c_a$ is a small, strongly
deforming, topologically active object whose support is a few percent of the
domain, and $u$ is driven by the lateral squeezing of the plume and the
recirculation behind the cap. Notably the velocity channels are where SEDONet
gains most, at $26$--$27\%$, which is not what one would predict if the
mechanism were purely sharp-interface resolution in the phase fields.

Two cautions attach to these numbers. The relative $L^2$ metric is unforgiving
for $c_a$, because the small norm of a field occupying a few percent of the
domain puts a small denominator under any error concentrated on the bubble rim;
that column is a stringent interface-localization score rather than evidence
that the bubble is misplaced, which \Cref{fig:fields} shows it is not. And the
common accuracy floor in the range $6$--$8\%$ is most likely not a shortcoming
of any of the three: a rank-$p$ separated representation is bounded below by
the Kolmogorov $p$-width of the solution manifold
(\Cref{sec:opnet:factorization}), and that width decays slowly for
advection-dominated problems with moving interfaces.

\begin{figure}[H]
\centering
\includegraphics[width=0.98\textwidth]{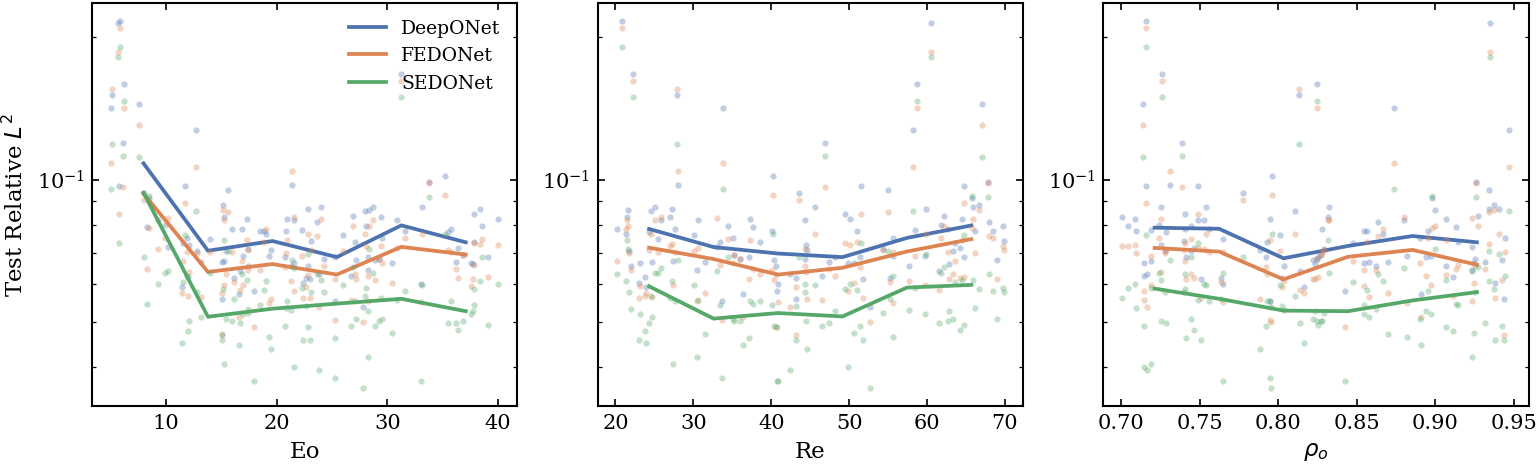}
\caption{Test relative $L^2$ of each held-out member (points) against the three
most influential design parameters, with binned means (lines), log scale. The
SEDONet curve lies below the other two across the entire range of all three
parameters and the vertical separation is roughly constant in the log, so the
improvement is multiplicative and essentially uniform in $\vp$ rather than
concentrated in a subregion of the design.}
\label{fig:errvsparam}
\end{figure}

\Cref{fig:errvsparam} answers a question the aggregate numbers cannot: is the
$24\%$ gain broad, or an average over a strong improvement on part of the
design and none elsewhere? It is the former. Across the E\"otv\"os number, the
Reynolds number and the density ratio the three binned-mean curves are close to
parallel on a logarithmic axis with SEDONet uniformly lowest, so the parameter
dependence is a property of the problem rather than of any one
architecture. That shared dependence is itself informative: error rises sharply
as $\mathrm{Eo}\to5$, where strong water-air tension produces a taller,
thinner plume and a more sharply curved bubble crown and therefore more
high-wavenumber content in every channel, and the ordering mirrors the
sensitivity ranking of the reference data itself. The scatter within each bin
spans nearly an order of magnitude, which is why the single-member comparison
of \Cref{sec:res:sample} must be read as an illustration rather than as
evidence.

\subsection{Where the error lives, and when}
\label{sec:res:localization}
A scalar error can conceal entirely different outcomes, and for interfacial
physics only one of them is worth having: a $24\%$ reduction is compatible with
a uniform improvement everywhere, with a large improvement in a small region,
or with an improvement in the bulk purchased at the cost of the interfaces.
\Cref{fig:errorloc} resolves which.

In space, the wall-normal profile is a two-peak structure that tracks the
moving geometry, with peaks at the bubble rim and at the interface that are
narrow and nearly coincident across the three models until $t=2$, after which
the upper peak detaches and migrates with the bubble while a second, broader
peak persists at the deformed interface and the plume flanks. It is at these
peaks, and only at these peaks, that the models separate: the baseline reaches
$0.33$ at $t=4$ against SEDONet's $0.17$, while in the bulk of both liquid
layers all three curves lie together near zero. The improvement is therefore
not a uniform rescaling of a smooth error field but a targeted improvement in
interface resolution.

The same panels falsify the explanation one would reach for first. The error is
small at the walls, at every time and for every model. This matters
because the classical justification for a Chebyshev basis is boundary-layer
resolution, and because the polynomial dictionary does carry its largest prior
amplitude at the domain boundary, the mechanism is available to the model.
It is demonstrably not what produces the gain, because there is no wall error
to remove: the viscous layers that form at the no-slip walls are three to five
times wider than the diffuse interface, carry no high-wavenumber content, and
the bubble never approaches them. Boundary emphasis is a real property of the
polynomial dictionary and is simply not what this problem rewards.

In time, the lower panels show the domain-averaged phase error against $t$. For
$t\lesssim1.5$, while the bubble is simply rising through the water, the three
curves are close and the raw-coordinate baseline is in fact marginally the best of the three: the target in that window is a single, smoothly
translating, nearly axisymmetric object that a raw trunk represents adequately,
and the embedded models pay a small price for carrying a dictionary they do not
yet need. The curves cross at breakthrough, and from there the ordering inverts
and the gap widens monotonically to the end of the horizon.

\begin{figure}[H]
\centering
\includegraphics[width=0.98\textwidth]{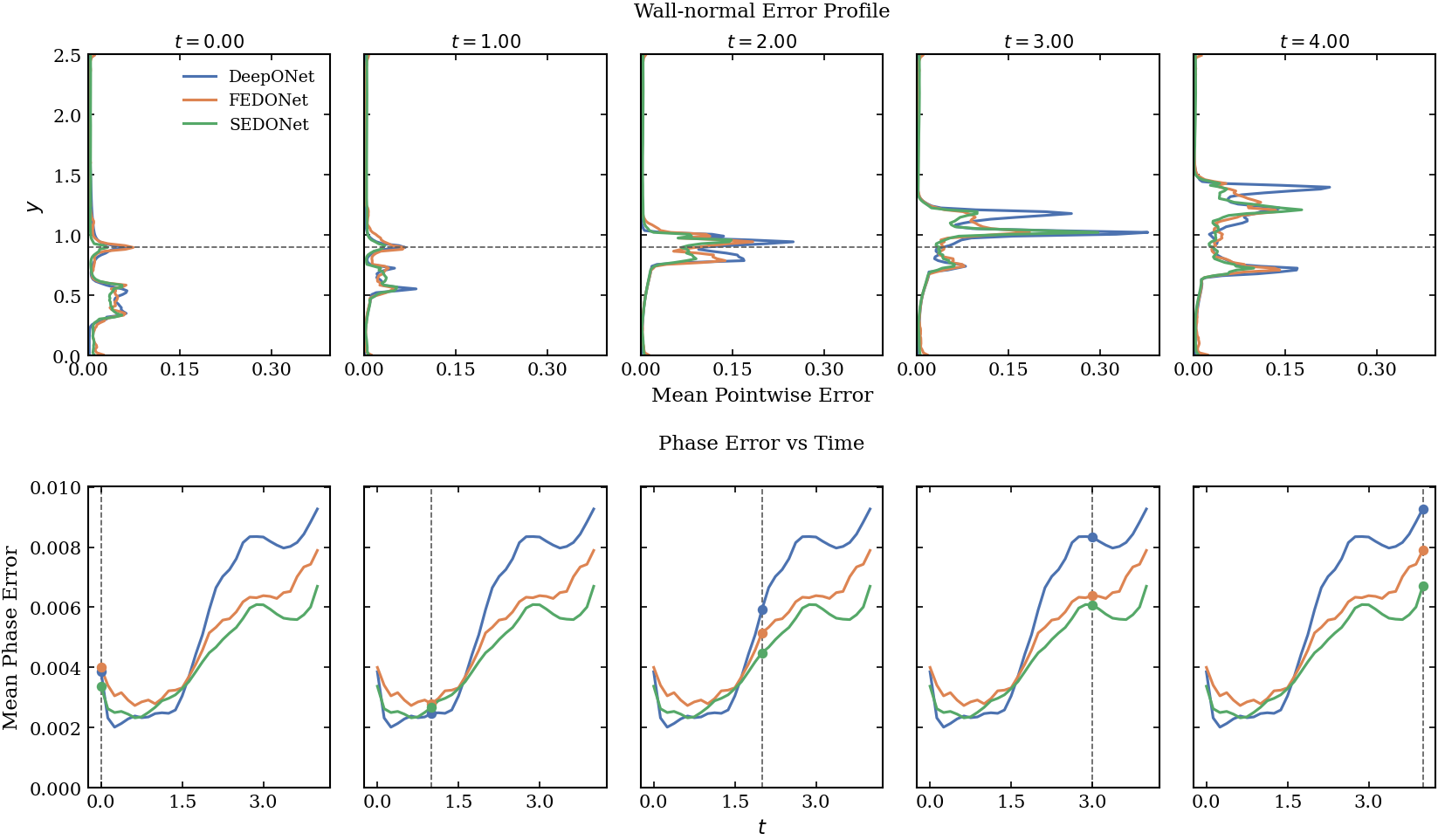}
\caption{Top: wall-normal error profile $e(y)$ from \cref{eq:profile} at
$t=0,1,2,3,4$; the dashed line marks the undisturbed water--oil interface
$y_{\mathrm{int}}=0.9$. Bottom: domain-averaged phase error against time, with
the vertical dashed line marking the snapshot shown above it. Error is
concentrated at the interfaces, near zero in both bulk layers and at both
walls, and the three models separate only after breakthrough.}
\label{fig:errorloc}
\end{figure}

The timing of that crossing is a direct test of the mechanism rather than an
incidental observation. Before breakthrough the target is smooth and
effectively low-rank, cheap to represent in any reasonable basis, and the trunk
is not the binding constraint on accuracy. After breakthrough it acquires a
triple junction, a thin high-curvature spike and a topological transition as
the bubble detaches, and simultaneously the monotone drift in the wall-normal
and temporal coordinates becomes pronounced. These are precisely the conditions
under which both advantages identified in \Cref{sec:res:accuracy} should come
into play, and this is precisely when the curves separate. The trunk embedding
does not make the surrogate uniformly better; it makes it better at the
topologically active phase of the flow, which is the phase where all three
surrogates are least accurate and the phase anyone runs this simulation to see.

\subsection{A hard test member, resolved in time}
\label{sec:res:sample}

Test member $269$ ranks in the hardest decile of the test set, for reasons the
previous subsections make legible. Two of its nine parameters sit at the edge
of the design box: the E\"otv\"os number is at the extreme low end of its
range, the corner where \Cref{fig:errvsparam} shows all three models degrading;
and the normalized oil spreading coefficient is near its lower bound, so the
configuration lies close to the total-wetting boundary at which oil would film
between water and air, leaving the triple junction nearly degenerate.

The $t=0$ row of \Cref{tab:sample269-time} needs comment before the rest can be
read, because one of its two anomalies is an artifact and the other is not. The
per-snapshot relative $L^2$ of the velocity channels at $t=0$ returns enormous
values in all three models, but these are not failures: the initial condition
sets the velocity identically to zero, so the denominator of \cref{eq:relL2}
vanishes and the figures carry no information about accuracy. We therefore
exclude $t=0$ from all per-snapshot velocity metrics; the space-time aggregate
of \Cref{tab:sample269} is unaffected, since there the denominator is dominated
by the finite velocities at later times. We flag this because it is a trap in
any operator-learning benchmark initialized from rest.

\begin{figure}[H]
\centering
\includegraphics[width=0.80\textwidth]{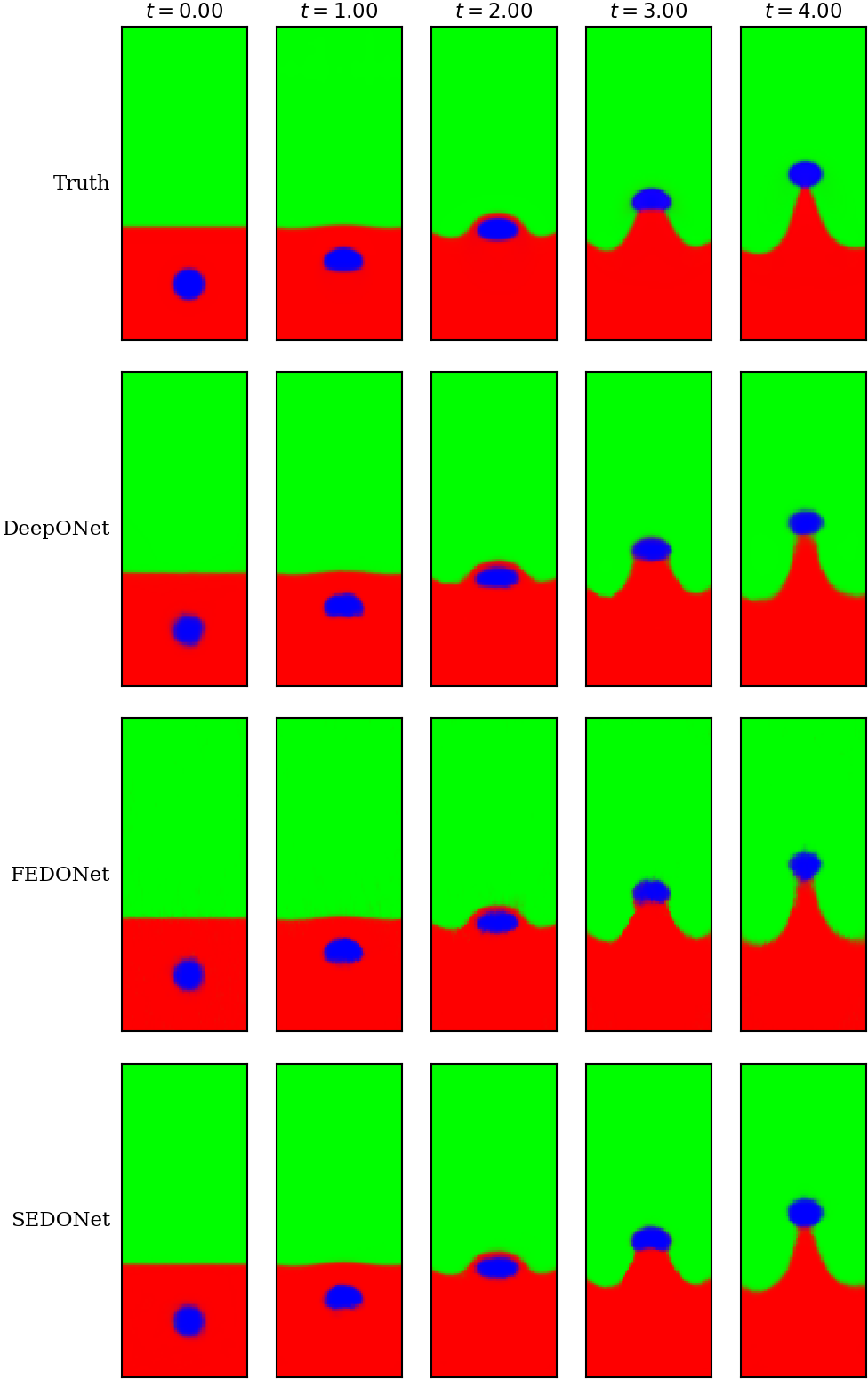}
\caption{Test member $269$: RGB composite of $(c_w,c_o,c_a)$ (red $=$ water,
green $=$ oil, blue $=$ air) at $t=0,1,2,3,4$. Rows: reference, DeepONet,
FEDONet, SEDONet. All three surrogates reproduce the full sequence, rise,
cap deformation, breakthrough, plume entrainment, detachment, and the
correct topology at every time, including the reconnected oil layer and the
sharp water spike beneath the detached bubble at $t=4$.}
\label{fig:fields}
\end{figure}

\begin{figure}[H]
\centering
\includegraphics[width=0.90\textwidth]{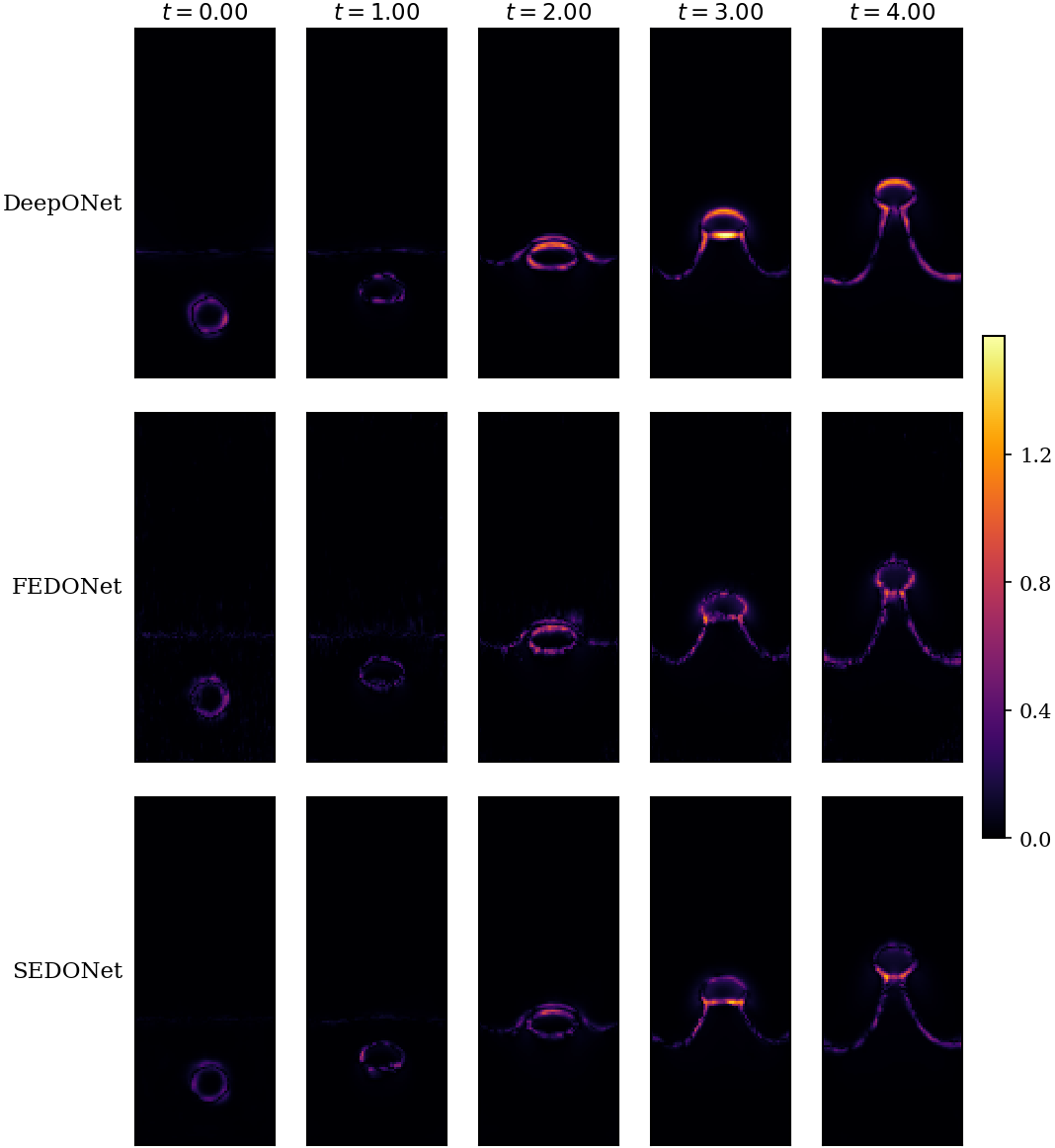}
\caption{Absolute phase error for the same member on a shared colour scale.
The error is a thin shell hugging the bubble rim, the water-oil interface and
the plume flanks, and is black over the bulk of both liquid layers. The
baseline's rim is a thick bright ring; SEDONet's is a thinner, dimmer arc. The
one persistent bright feature common to all three is the horizontal band
immediately beneath the bubble crown at $t\ge2$, where the drainage film
between the bubble and the oil is thinnest.}
\label{fig:fields-error}
\end{figure}

\begin{table}[ht!]
\centering
\caption{Relative $L^2$ error on test member $269$, resolved by snapshot. The
$t=0$ velocity entries are undefined, for the reason given in the text, and are
marked $\dagger$. Lowest per row in bold.}
\label{tab:sample269-time}
\begin{tabular}{llcccccc}
\toprule
$t$ & Model & mean & $c_w$ & $c_o$ & $c_a$ & $u$ & $v$ \\
\midrule
\multirow{3}{*}{$0$}
 & DeepONet & $\dagger$ & 0.0337 & 0.0073 & 0.1439 & $\dagger$ & $\dagger$ \\
 & FEDONet  & $\dagger$ & 0.0320 & 0.0093 & 0.1275 & $\dagger$ & $\dagger$ \\
 & SEDONet  & $\dagger$ & \textbf{0.0236} & \textbf{0.0045} & \textbf{0.1056} & $\dagger$ & $\dagger$ \\
\midrule
\multirow{3}{*}{$1$}
 & DeepONet & 0.0612 & 0.0244 & 0.0067 & 0.1004 & \textbf{0.1067} & 0.0679 \\
 & FEDONet  & 0.0618 & \textbf{0.0231} & 0.0089 & \textbf{0.0865} & 0.1148 & 0.0755 \\
 & SEDONet  & \textbf{0.0609} & 0.0249 & \textbf{0.0046} & 0.1069 & 0.1075 & \textbf{0.0605} \\
\midrule
\multirow{3}{*}{$2$}
 & DeepONet & 0.1563 & 0.0718 & 0.0265 & 0.2874 & 0.2681 & 0.1276 \\
 & FEDONet  & 0.1221 & 0.0534 & 0.0206 & 0.2121 & 0.2121 & 0.1121 \\
 & SEDONet  & \textbf{0.0985} & \textbf{0.0408} & \textbf{0.0167} & \textbf{0.1597} & \textbf{0.1865} & \textbf{0.0890} \\
\midrule
\multirow{3}{*}{$3$}
 & DeepONet & 0.2160 & 0.0752 & 0.0515 & 0.3803 & 0.3675 & 0.2056 \\
 & FEDONet  & \textbf{0.1399} & \textbf{0.0479} & 0.0366 & \textbf{0.1730} & 0.2771 & 0.1650 \\
 & SEDONet  & 0.1446 & 0.0494 & \textbf{0.0335} & 0.2147 & \textbf{0.2689} & \textbf{0.1564} \\
\midrule
\multirow{3}{*}{$4$}
 & DeepONet & 0.1685 & 0.0601 & 0.0641 & 0.2927 & 0.2174 & 0.2081 \\
 & FEDONet  & 0.1333 & 0.0507 & 0.0452 & \textbf{0.1831} & 0.2406 & 0.1472 \\
 & SEDONet  & \textbf{0.1192} & \textbf{0.0398} & \textbf{0.0412} & 0.1986 & \textbf{0.1800} & \textbf{0.1364} \\
\bottomrule
\end{tabular}
\end{table}

\begin{table}[ht!]
\centering
\caption{Relative $L^2$ error on member $269$ aggregated over the full
space--time grid, that is, with a single global denominator rather than a
per-snapshot one. Errors are roughly $50\%$ above the test mean of
\Cref{tab:test} and the overall ordering is restored.}
\label{tab:sample269}
\begin{tabular}{lcccccc}
\toprule
Model    & mean & $c_w$ & $c_o$ & $c_a$ & $u$ & $v$ \\
\midrule
DeepONet & 0.1421 & 0.0546 & 0.0359 & 0.2527 & 0.2233 & 0.1440 \\
FEDONet  & 0.1083 & 0.0427 & 0.0259 & 0.1645 & 0.1892 & 0.1192 \\
SEDONet  & \textbf{0.0954} & \textbf{0.0350} & \textbf{0.0230} & \textbf{0.1555} & \textbf{0.1624} & \textbf{0.1011} \\
\bottomrule
\end{tabular}
\end{table}

The concentration errors at $t=0$ are a different matter, and more interesting.
They are not small. Yet the initial phase field is an analytic function of the
parameters, so a surrogate handed $\vp$ knows the initial state exactly
and is nonetheless placing the bubble rim a fraction of a cell off. This is
pure representation error in the trunk, present before any dynamics have been
learned, and it is the cleanest available lower bound on what the architecture
costs. It also suggests a structural remedy in the spirit of the simplex
closure: predicting a correction to the analytically known initial field rather
than the field itself.

With that understood, \Cref{tab:sample269-time} corroborates
\Cref{fig:errorloc} on a single member. At $t=1$, before breakthrough, the
three models are indistinguishable, a spread under $1.5\%$ in the mean. At
$t=2$, as the bubble begins to pierce the interface, SEDONet is lowest on every
channel, by $37.0\%$ relative to the baseline and $19.3\%$ relative to FEDONet;
at $t=4$, with the bubble detached and the spike thin, the margins are $29.3\%$
and $10.6\%$.

At $t=3$ the ordering reverses on one channel, and the reversal is informative
rather than anomalous. FEDONet's mean is a few percent below SEDONet's, and the
margin comes almost entirely from $c_a$, while SEDONet retains $c_o$, $u$ and
$v$. At that instant the bubble is a compact, recently detached blob whose
boundary is a closed curve of high and nearly uniform curvature, in the
angular coordinate a genuinely periodic feature, which is what a Fourier
dictionary represents most efficiently, whereas the water spike beneath it
is a monotone, strongly non-periodic protrusion. A single member at a single
instant should not carry an architectural claim in either direction, but it
does show that the channel-wise ordering of \Cref{tab:test} is an aggregate
statement rather than a pointwise one, and it shows the two dictionaries
behaving as their priors predict.

The qualitative picture is in the end the more important result. All three
models reproduce the correct topology at every snapshot (\Cref{fig:fields}):
bubble detached, spike sharp, oil layer reconnected. \Cref{fig:fields-error}
shows that the absolute error is a thin shell hugging the bubble rim and the
plume flanks and is essentially black everywhere else, interface
localization error, the models placing the interface a fraction of a
cell off, rather than a field-level failure in which the bulk arrangement is
wrong. What distinguishes the models is the thickness and brightness of that
shell. The improvement sits exactly where the physics is.

\subsection{Physical diagnostics and constraint preservation}
\label{sec:res:physical}
A pointwise norm is not what a surrogate is ultimately used for; integral and
geometric quantities are, and a constraint-consistent output is, so we close by
evaluating both. The bubble kinematics follow the same pattern as the field error, and more
sharply. All three models track the centroid height closely until $t\approx1$,
after which the deviations grow monotonically; by the end of the horizon
SEDONet's deviation is $42\%$ below the baseline's and $33\%$ below FEDONet's,
both larger than the corresponding field-error gaps, and the interquartile
bands are well separated over the second half of the horizon. The improvement
in interface localization therefore translates into a better-than-proportional
improvement in the geometric quantity that localization determines.

The entrained volume is more equivocal and we report it as such. SEDONet's
deviation is the largest of the three until the plume has fully formed,
and only then becomes the smallest. Part of the early-time deviation is a
small-number artifact, since before breakthrough the entrained volume is a
near-zero integral over the diffuse tail of the interface profile and any
smoothing bias registers as a bias in the integral. But the effect is real, and
a surrogate used to predict entrainment specifically would need it addressed,
most naturally by supervising the diagnostic directly or by closing the initial
condition structurally.

\begin{figure}[H]
\centering
\includegraphics[width=0.92\textwidth]{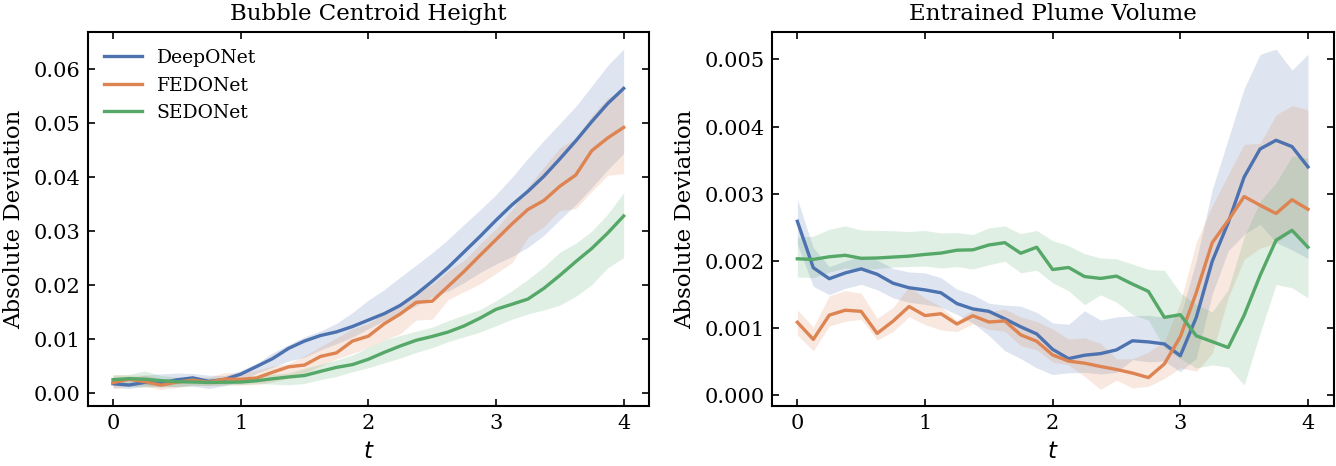}
\caption{Absolute deviation of two derived diagnostics computed from the
predicted fields and compared with the reference: bubble centroid height
$y_{\mathrm{bubble}}$ (left) and entrained plume volume $V_{\mathrm{plume}}$
(right). Lines are means over the $103$ test members, bands the interquartile
spread.}
\label{fig:diagfidelity}
\end{figure}

Constraint preservation is the one place where the surrogate is exactly right
rather than approximately so. With the closure of \cref{eq:closure-std} the
measured simplex, the violation sits at the single-precision rounding floor for all
three models, orders of magnitude below the drift present in the reference data
itself, and the value is identical across the three independently
trained networks. That identity is itself the diagnostic: a learned
approximation to the constraint would vary with architecture and with seed,
whereas a value reproduced by three different models is the rounding error of
the closure arithmetic and nothing else. It is worth being clear about what
this does not say. The surrogate satisfies the algebraic constraint exactly
while satisfying none of the differential ones, since neither incompressibility
nor mass conservation is enforced anywhere; structural constraints are
essentially free when they are affine, which is not the same as the surrogate
being physically consistent in general.
\section{Summary and Conclusions}
\label{sec:conclusion}

We have introduced a ternary Cahn-Hilliard-Navier-Stokes benchmark for
operator learning, an air bubble rising through water, piercing a water-oil
interface and entraining a plume into the oil, and used it to compare three
parameter-matched DeepONet variants that differ only in the trunk coordinate
embedding. The benchmark contributes three things its predecessors lack: a
five-channel space-time target rather than a single scalar field, an algebraic
constraint on the state that a faithful surrogate should respect exactly, and a
reference solver in which that constraint is a structural identity of the
discretization rather than an approximation, so that any violation exhibited by
a surrogate is provably its own.

On this benchmark the Chebyshev-embedded trunk reduces test relative $L^2$
error by $24.0\%$ relative to the raw-coordinate baseline and by $16.8\%$
relative to the Fourier-embedded variant, improves every output channel and the
bubble-centroid diagnostic, and preserves the simplex constraint to the
single-precision rounding floor through an affine closure. The comparison that
carries the claim is parameter-exact: the two embedded models have identical
parameter counts, depths, widths, optimizers and data, so the difference
between them is attributable to the coordinate dictionary and to nothing else.

\begin{table}[ht!]
\centering
\caption{Summary. All three models share branch and trunk depths, widths,
optimizer, schedule and data, and differ only in the trunk embedding $\phi$.
The FEDONet--SEDONet comparison is parameter-exact.}
\label{tab:summary}
\begin{tabular}{lccc}
\toprule
Model & Test rel.\ $L^2$ & Reduction & Parameters \\
\midrule
DeepONet (raw trunk) & 0.0821          & ---       & 0.73~M \\
FEDONet (Fourier)    & 0.0750          & $8.6\%$   & 0.86~M \\
SEDONet (Chebyshev)  & \textbf{0.0624} & $24.0\%$  & 0.86~M \\
\bottomrule
\end{tabular}
\end{table}

The interpretive claim we would defend is narrower than the headline number,
and the evidence for it is the localization of the gain rather than its size.
The advantage of the polynomial dictionary here is not boundary-layer
resolution: the error is smallest at the walls, at every time and for every
model, so although the Chebyshev prior does place its greatest amplitude near
the domain boundary, there is nothing there for it to resolve. What the
dictionary supplies instead is twofold. It represents the strongly
non-periodic, monotone trends of the solution in the wall-normal and temporal
coordinates in a handful of low-order modes, where a periodic dictionary needs
a slowly converging series; and it conditions the trunk input
deterministically, where the Fourier alternative achieves the same effect only
in expectation and at a bandwidth that must be chosen. Both advantages are
contingent on the problem: on a genuinely periodic domain, or one without a
dominant drift, we would not expect the ordering to hold.

Several limitations bear on how the numbers should be read. Each model was
trained once, so the reported gaps are point estimates without run-to-run error
bars, and repeated runs at matched budgets are the first thing this study
needs. The learning target is a diffuse-interface solution at a finite Cahn
number, so conclusions transfer to sharp-interface physics only insofar as the
diffuse solution approximates it. The design space contains a single
topological class by construction, so the surrogates have not been tested near
a bifurcation boundary, and all test members are drawn from that same design,
leaving extrapolation untested. The surrogate is also purely data driven apart
from the algebraic closure, and the configuration is two-dimensional, whereas
triple junctions in three dimensions are lines rather than points.
\section{Future Work}
\label{sec:future}

The most pressing extensions are evaluative rather than architectural. The
effect sizes reported here rest on one training run per model, so converting
them into defensible claims requires repetition at matched compute budgets
across several seeds and several choices of the Fourier bandwidth, so that a
tuned stochastic embedding and a deterministic one are compared on equal terms.
The design space should then be widened until it crosses a regime boundary, and
sampled outside the training box, since engineering use of a parametric
surrogate almost always involves some extrapolation. The same comparison
belongs in three dimensions, where triple junctions are lines rather than
points and the anisotropy of a truncated tensor-product dictionary becomes a
more delicate question, and on other multiphase systems, so that the conditions
under which a polynomial dictionary helps can be separated from the particulars
of one benchmark. Because the coordinate representation is modular, the same
question can be asked of the lifting layers of other neural-operator families.

The architectural directions follow from where the error still sits. All three
models spend their budget at the diffuse interfaces, which are precisely the
features a fixed global basis represents least efficiently; a dictionary
combining global trend modes with localized modes tied to the instantaneous
interface position would attack that budget directly, and more generally a
basis that adapts to the moving geometry is the natural route past the limits
of any fixed linear expansion. A second direction is to extend the structural
approach that worked for the algebraic constraint to the differential ones:
incompressibility and mass conservation are not affine, but they can be built
into the output representation or enforced by projection rather than left for
the data to teach, and the same logic applies to an initial condition known in
closed form. The benchmark itself is meant to be reusable, and a surrogate that
respects the differential structure as well as the algebraic one would be the
natural next entry.
\section*{Declaration of Competing Interest}
The authors declare that they have no known competing financial interests or personal relationships that could have appeared to influence the work reported in this paper.

\section*{Acknowledgements}
This work was supported by the grant in part by the AFOSR Grant FA9550-24-1-0327.

\section*{Data Availability}
The data supporting the findings of this study are available from the corresponding author upon reasonable request.

\bibliographystyle{unsrt}
\bibliography{references}

@article{cahn1958,
  title={Free energy of a nonuniform system. I. Interfacial free energy},
  author={Cahn, John W and Hilliard, John E},
  journal={The Journal of Chemical Physics},
  volume={28},
  number={2},
  pages={258--267},
  year={1958},
  publisher={American Institute of Physics},
  doi={10.1063/1.1744102}
}

@article{anderson1998,
  author  = {Anderson, D. M. and McFadden, G. B. and Wheeler, A. A.},
  title   = {Diffuse-Interface Methods in Fluid Mechanics},
  journal = {Annual Review of Fluid Mechanics},
  volume  = {30},
  pages   = {139--165},
  year    = {1998},
  doi     = {10.1146/annurev.fluid.30.1.139}
}

@article{lowengrub1998,
  author  = {Lowengrub, John and Truskinovsky, Lev},
  title   = {Quasi-Incompressible {C}ahn--{H}illiard Fluids and Topological Transitions},
  journal = {Proceedings of the Royal Society of London A},
  volume  = {454},
  number  = {1978},
  pages   = {2617--2654},
  year    = {1998},
  doi     = {10.1098/rspa.1998.0273}
}

@article{jacqmin1999,
  author  = {Jacqmin, David},
  title   = {Calculation of Two-Phase {N}avier--{S}tokes Flows Using Phase-Field Modeling},
  journal = {Journal of Computational Physics},
  volume  = {155},
  number  = {1},
  pages   = {96--127},
  year    = {1999},
  doi     = {10.1006/jcph.1999.6332}
}

@article{ding2007,
  author  = {Ding, Hang and Spelt, Peter D. M. and Shu, Chang},
  title   = {Diffuse Interface Model for Incompressible Two-Phase Flows with Large Density Ratios},
  journal = {Journal of Computational Physics},
  volume  = {226},
  number  = {2},
  pages   = {2078--2095},
  year    = {2007},
  doi     = {10.1016/j.jcp.2007.06.028}
}

@article{abels2012,
  author  = {Abels, Helmut and Garcke, Harald and Gr{\"u}n, G{\"u}nther},
  title   = {Thermodynamically Consistent, Frame Indifferent Diffuse Interface Models for Incompressible Two-Phase Flows with Different Densities},
  journal = {Mathematical Models and Methods in Applied Sciences},
  volume  = {22},
  number  = {3},
  pages   = {1150013},
  year    = {2012},
  doi     = {10.1142/S0218202511500138}
}

@article{degennes1985,
  author  = {de Gennes, P. G.},
  title   = {Wetting: Statics and Dynamics},
  journal = {Reviews of Modern Physics},
  volume  = {57},
  number  = {3},
  pages   = {827--863},
  year    = {1985},
  doi     = {10.1103/RevModPhys.57.827}
}

@article{boyer2006,
  author  = {Boyer, Franck and Lapuerta, C{\'e}line},
  title   = {Study of a Three Component {C}ahn--{H}illiard Flow Model},
  journal = {ESAIM: Mathematical Modelling and Numerical Analysis},
  volume  = {40},
  number  = {4},
  pages   = {653--687},
  year    = {2006},
  doi     = {10.1051/m2an:2006028}
}

@article{boyer2010,
  author  = {Boyer, Franck and Lapuerta, C{\'e}line and Minjeaud, Sebastian and Piar, Bruno and Quintard, Michel},
  title   = {{C}ahn--{H}illiard/{N}avier--{S}tokes Model for the Simulation of Three-Phase Flows},
  journal = {Transport in Porous Media},
  volume  = {82},
  number  = {3},
  pages   = {463--483},
  year    = {2010},
  doi     = {10.1007/s11242-009-9408-z}
}

@article{kim2005,
  author  = {Kim, Junseok and Lowengrub, John},
  title   = {Phase Field Modeling and Simulation of Three-Phase Flows},
  journal = {Interfaces and Free Boundaries},
  volume  = {7},
  number  = {4},
  pages   = {435--466},
  year    = {2005},
  doi     = {10.4171/IFB/132}
}

@article{kim2012,
  author  = {Kim, Junseok},
  title   = {Phase-Field Models for Multi-Component Fluid Flows},
  journal = {Communications in Computational Physics},
  volume  = {12},
  number  = {3},
  pages   = {613--661},
  year    = {2012},
  doi     = {10.4208/cicp.301110.040811a}
}

@article{hirt1981,
  author  = {Hirt, C. W. and Nichols, B. D.},
  title   = {Volume of Fluid ({VOF}) Method for the Dynamics of Free Boundaries},
  journal = {Journal of Computational Physics},
  volume  = {39},
  number  = {1},
  pages   = {201--225},
  year    = {1981},
  doi     = {10.1016/0021-9991(81)90145-5}
}

@article{sussman1994,
  author  = {Sussman, Mark and Smereka, Peter and Osher, Stanley},
  title   = {A Level Set Approach for Computing Solutions to Incompressible Two-Phase Flow},
  journal = {Journal of Computational Physics},
  volume  = {114},
  number  = {1},
  pages   = {146--159},
  year    = {1994},
  doi     = {10.1006/jcph.1994.1155}
}

@article{tryggvason2001,
  author  = {Tryggvason, G. and Bunner, B. and Esmaeeli, A. and Juric, D. and Al-Rawahi, N. and Tauber, W. and Han, J. and Nas, S. and Jan, Y.-J.},
  title   = {A Front-Tracking Method for the Computations of Multiphase Flow},
  journal = {Journal of Computational Physics},
  volume  = {169},
  number  = {2},
  pages   = {708--759},
  year    = {2001},
  doi     = {10.1006/jcph.2001.6726}
}

@article{smith2002,
  author  = {Smith, K. A. and Solis, F. J. and Chopp, D. L.},
  title   = {A Projection Method for Motion of Triple Junctions by Levels Sets},
  journal = {Interfaces and Free Boundaries},
  volume  = {4},
  number  = {3},
  pages   = {263--276},
  year    = {2002},
  doi     = {10.4171/IFB/61}
}

@article{brackbill1992,
  author  = {Brackbill, J. U. and Kothe, D. B. and Zemach, C.},
  title   = {A Continuum Method for Modeling Surface Tension},
  journal = {Journal of Computational Physics},
  volume  = {100},
  number  = {2},
  pages   = {335--354},
  year    = {1992},
  doi     = {10.1016/0021-9991(92)90240-Y}
}

@article{chorin1968,
  author  = {Chorin, Alexandre Joel},
  title   = {Numerical Solution of the {N}avier--{S}tokes Equations},
  journal = {Mathematics of Computation},
  volume  = {22},
  number  = {104},
  pages   = {745--762},
  year    = {1968},
  doi     = {10.1090/S0025-5718-1968-0242392-2}
}

@article{dodd2014,
  author  = {Dodd, Michael S. and Ferrante, Antonino},
  title   = {A Fast Pressure-Correction Method for Incompressible Two-Fluid Flows},
  journal = {Journal of Computational Physics},
  volume  = {273},
  pages   = {416--434},
  year    = {2014},
  doi     = {10.1016/j.jcp.2014.05.024}
}

@incollection{eyre1998,
  author    = {Eyre, David J.},
  title     = {Unconditionally Gradient Stable Time Marching the {C}ahn--{H}illiard Equation},
  booktitle = {Computational and Mathematical Models of Microstructural Evolution},
  series    = {Materials Research Society Symposium Proceedings},
  volume    = {529},
  pages     = {39--46},
  publisher = {Materials Research Society},
  address   = {Warrendale, PA},
  year      = {1998},
  doi       = {10.1557/PROC-529-39}
}

@article{shen2018sav,
  author  = {Shen, Jie and Xu, Jie and Yang, Jiang},
  title   = {The Scalar Auxiliary Variable ({SAV}) Approach for Gradient Flows},
  journal = {Journal of Computational Physics},
  volume  = {353},
  pages   = {407--416},
  year    = {2018},
  doi     = {10.1016/j.jcp.2017.10.021}
}

@article{bonhomme2012,
  author  = {Bonhomme, Roberto and Magnaudet, Jacques and Duval, Fabien and Piar, Bruno},
  title   = {Inertial Dynamics of Air Bubbles Crossing a Horizontal Fluid--Fluid Interface},
  journal = {Journal of Fluid Mechanics},
  volume  = {707},
  pages   = {405--443},
  year    = {2012},
  doi     = {10.1017/jfm.2012.288}
}

@article{gibou2019,
  author  = {Gibou, Frederic and Hyde, David and Fedkiw, Ron},
  title   = {Sharp Interface Approaches and Deep Learning Techniques for Multiphase Flows},
  journal = {Journal of Computational Physics},
  volume  = {380},
  pages   = {442--463},
  year    = {2019},
  doi     = {10.1016/j.jcp.2018.05.031}
}

@book{boyd2001,
  author    = {Boyd, John P.},
  title     = {Chebyshev and {F}ourier Spectral Methods},
  edition   = {2nd},
  publisher = {Dover Publications},
  address   = {Mineola, NY},
  year      = {2001}
}

@book{gottlieb1977,
  author    = {Gottlieb, David and Orszag, Steven A.},
  title     = {Numerical Analysis of Spectral Methods: Theory and Applications},
  publisher = {Society for Industrial and Applied Mathematics},
  address   = {Philadelphia, PA},
  year      = {1977},
  doi       = {10.1137/1.9781611970425}
}

@book{trefethen2013,
  author    = {Trefethen, Lloyd N.},
  title     = {Approximation Theory and Approximation Practice},
  publisher = {SIAM},
  address   = {Philadelphia, PA},
  year      = {2013}
}

@book{canuto2006,
  author    = {Canuto, Claudio and Hussaini, M. Youssuff and Quarteroni, Alfio and Zang, Thomas A.},
  title     = {Spectral Methods: Fundamentals in Single Domains},
  publisher = {Springer},
  address   = {Berlin, Heidelberg},
  year      = {2006},
  doi       = {10.1007/978-3-540-30726-6}
}

@article{chenchen1995,
  author  = {Chen, Tianping and Chen, Hong},
  title   = {Universal Approximation to Nonlinear Operators by Neural Networks with Arbitrary Activation Functions and Its Application to Dynamical Systems},
  journal = {IEEE Transactions on Neural Networks},
  volume  = {6},
  number  = {4},
  pages   = {911--917},
  year    = {1995},
  doi     = {10.1109/72.392253}
}

@article{lu2021,
  author  = {Lu, Lu and Jin, Pengzhan and Pang, Guofei and Zhang, Zhongqiang and Karniadakis, George Em},
  title   = {Learning Nonlinear Operators via {D}eep{ONet} Based on the Universal Approximation Theorem of Operators},
  journal = {Nature Machine Intelligence},
  volume  = {3},
  number  = {3},
  pages   = {218--229},
  year    = {2021},
  doi     = {10.1038/s42256-021-00302-5}
}

@article{kovachki2023,
  author  = {Kovachki, Nikola and Li, Zongyi and Liu, Burigede and Azizzadenesheli, Kamyar and Bhattacharya, Kaushik and Stuart, Andrew and Anandkumar, Anima},
  title   = {Neural Operator: Learning Maps Between Function Spaces with Applications to {PDE}s},
  journal = {Journal of Machine Learning Research},
  volume  = {24},
  number  = {89},
  pages   = {1--97},
  year    = {2023},
  doi = {10.5555/3648699.3648788}
}

@misc{li2021fno,
  author = {Li, Zongyi and Kovachki, Nikola and Azizzadenesheli, Kamyar and Liu, Burigede and Bhattacharya, Kaushik and Stuart, Andrew and Anandkumar, Anima},
  title  = {{F}ourier Neural Operator for Parametric Partial Differential Equations},
  year   = {2021},
  eprint = {2010.08895},
  archivePrefix = {arXiv},
  primaryClass  = {cs.LG},
  doi = {10.48550/arXiv.2010.08895}
}

@article{lanthaler2022,
  author  = {Lanthaler, Samuel and Mishra, Siddhartha and Karniadakis, George Em},
  title   = {Error Estimates for {D}eep{ONet}s: A Deep Learning Framework in Infinite Dimensions},
  journal = {Transactions of Mathematics and Its Applications},
  volume  = {6},
  number  = {1},
  pages   = {tnac001},
  year    = {2022},
  doi     = {10.1093/imatrm/tnac001}
}

@article{lu2022fair,
  author  = {Lu, Lu and Meng, Xuhui and Cai, Shengze and Mao, Zhiping and Goswami, Somdatta and Zhang, Zhongqiang and Karniadakis, George Em},
  title   = {A Comprehensive and Fair Comparison of Two Neural Operators (with Practical Extensions) Based on {FAIR} Data},
  journal = {Computer Methods in Applied Mechanics and Engineering},
  volume  = {393},
  pages   = {114778},
  year    = {2022},
  doi     = {10.1016/j.cma.2022.114778}
}

@article{tripura2023wno,
  author  = {Tripura, Tapas and Chakraborty, Souvik},
  title   = {Wavelet Neural Operator for Solving Parametric Partial Differential Equations in Computational Mechanics Problems},
  journal = {Computer Methods in Applied Mechanics and Engineering},
  volume  = {404},
  pages   = {115783},
  year    = {2023},
  doi     = {10.1016/j.cma.2022.115783}
}

@inproceedings{hao2023gnot,
  author    = {Hao, Zhongkai and Wang, Zhengyi and Su, Hang and Ying, Chengyang and Dong, Yinpeng and Liu, Songming and Cheng, Ze and Song, Jian and Zhu, Jun},
  title     = {{GNOT}: A General Neural Operator Transformer for Operator Learning},
  booktitle = {International Conference on Machine Learning (ICML)},
  pages     = {12556--12569},
  year      = {2023},
  doi = {10.48550/arXiv.2302.14376}
}

@inproceedings{raonic2023cno,
  author    = {Raonic, Bogdan and Molinaro, Roberto and Rohner, Tim and Mishra, Siddhartha and de B{\'e}zenac, Emmanuel},
  title     = {Convolutional Neural Operators for Robust and Accurate Learning of {PDE}s},
  booktitle = {Advances in Neural Information Processing Systems (NeurIPS)},
  year      = {2023},
  doi = {10.48550/arXiv.2302.01178}
}

@article{li2023geofno,
  author  = {Li, Zongyi and Huang, Daniel Zhengyu and Liu, Burigede and Anandkumar, Anima},
  title   = {{F}ourier Neural Operator with Learned Deformations for {PDE}s on General Geometries},
  journal = {Journal of Machine Learning Research},
  volume  = {24},
  number  = {388},
  pages   = {1--26},
  year    = {2023},
  doi = {10.5555/3648699.3649087}
}

@article{fanaskov2023spectral,
  author  = {Fanaskov, V. S. and Oseledets, I. V.},
  title   = {Spectral Neural Operators},
  journal = {Doklady Mathematics},
  volume  = {108},
  pages   = {S226--S232},
  year    = {2023},
  doi     = {10.1134/S1064562423701107}
}

@article{liu2024opno,
  author  = {Liu, Ziyuan and Wang, Haifeng and Zhang, Hong and Bao, Kaijun and Qian, Xu and Song, Songhe},
  title   = {Render Unto Numerics: Orthogonal Polynomial Neural Operator for {PDE}s with Nonperiodic Boundary Conditions},
  journal = {SIAM Journal on Scientific Computing},
  volume  = {46},
  number  = {4},
  pages   = {C323--C348},
  year    = {2024},
  doi     = {10.1137/23M1556320}
}

@article{wu2024ionet,
  author  = {Wu, Sidi and Zhu, Aiqing and Tang, Yifa and Lu, Benzhuo},
  title   = {Solving Parametric Elliptic Interface Problems via Interfaced Operator Network},
  journal = {Journal of Computational Physics},
  volume  = {514},
  pages   = {113217},
  year    = {2024},
  doi     = {10.1016/j.jcp.2024.113217}
}

@article{bi2025xideeponet,
  author  = {Bi, Xiaoxue and Chen, Xin and Zhao, Chuang and Li, Qiang and Zhang, Jian},
  title   = {{XI}-{D}eep{ONet}: An Operator Learning Method for Elliptic Interface Problems},
  journal = {Journal of Computational Physics},
  volume  = {538},
  pages   = {114164},
  year    = {2025},
  doi     = {10.1016/j.jcp.2025.114164}
}

@article{bahmani2025rino,
  author  = {Bahmani, Bahador and Goswami, Somdatta and Kevrekidis, Ioannis G. and Shields, Michael D.},
  title   = {A Resolution Independent Neural Operator},
  journal = {Computer Methods in Applied Mechanics and Engineering},
  volume  = {444},
  pages   = {118113},
  year    = {2025},
  doi     = {10.1016/j.cma.2025.118113}
}

@article{howard2023multifidelity,
  author  = {Howard, Amanda A. and Perego, Mauro and Karniadakis, George Em and Stinis, Panos},
  title   = {Multifidelity Deep Operator Networks for Data-Driven and Physics-Informed Problems},
  journal = {Journal of Computational Physics},
  volume  = {493},
  pages   = {112462},
  year    = {2023},
  doi     = {10.1016/j.jcp.2023.112462}
}

@article{liu2024spectralbias,
  author  = {Liu, Xinliang and Xu, Bo and Cao, Shuhao and Zhang, Lei},
  title   = {Mitigating Spectral Bias for the Multiscale Operator Learning},
  journal = {Journal of Computational Physics},
  volume  = {506},
  pages   = {112944},
  year    = {2024},
  doi     = {10.1016/j.jcp.2024.112944}
}

@article{raissi2019pinn,
  author  = {Raissi, M. and Perdikaris, P. and Karniadakis, G. E.},
  title   = {Physics-Informed Neural Networks: A Deep Learning Framework for Solving Forward and Inverse Problems Involving Nonlinear Partial Differential Equations},
  journal = {Journal of Computational Physics},
  volume  = {378},
  pages   = {686--707},
  year    = {2019},
  doi     = {10.1016/j.jcp.2018.10.045}
}

@article{wang2021pideeponet,
  author  = {Wang, Sifan and Wang, Hanwen and Perdikaris, Paris},
  title   = {Learning the Solution Operator of Parametric Partial Differential Equations with Physics-Informed {D}eep{ONet}s},
  journal = {Science Advances},
  volume  = {7},
  number  = {40},
  pages   = {eabi8605},
  year    = {2021},
  doi     = {10.1126/sciadv.abi8605}
}

@article{goswami2022vdeeponet,
  author  = {Goswami, Somdatta and Yin, Minglang and Yu, Yue and Karniadakis, George Em},
  title   = {A Physics-Informed Variational {D}eep{ONet} for Predicting Crack Path in Quasi-Brittle Materials},
  journal = {Computer Methods in Applied Mechanics and Engineering},
  volume  = {391},
  pages   = {114587},
  year    = {2022},
  doi     = {10.1016/j.cma.2022.114587}
}

@article{karniadakis2021,
  author  = {Karniadakis, George Em and Kevrekidis, Ioannis G. and Lu, Lu and Perdikaris, Paris and Wang, Sifan and Yang, Liu},
  title   = {Physics-Informed Machine Learning},
  journal = {Nature Reviews Physics},
  volume  = {3},
  pages   = {422--440},
  year    = {2021},
  doi     = {10.1038/s42254-021-00314-5}
}

@misc{rahaman2019,
  author = {Rahaman, Nasim and Baratin, Aristide and Arpit, Devansh and Draxler, Felix and Lin, Min and Hamprecht, Fred A. and Bengio, Yoshua and Courville, Aaron},
  title  = {On the Spectral Bias of Neural Networks},
  year   = {2019},
  eprint = {1806.08734},
  archivePrefix = {arXiv},
  primaryClass  = {stat.ML},
  doi = {10.48550/arXiv.1806.08734}
}

@article{xu2020frequency,
  author  = {Xu, Zhi-Qin John and Zhang, Yaoyu and Luo, Tao and Xiao, Yanyang and Ma, Zheng},
  title   = {Frequency Principle: {F}ourier Analysis Sheds Light on Deep Neural Networks},
  journal = {Communications in Computational Physics},
  volume  = {28},
  number  = {5},
  pages   = {1746--1767},
  year    = {2020},
  doi     = {10.4208/cicp.OA-2020-0085}
}

@inproceedings{jacot2018ntk,
  author    = {Jacot, Arthur and Gabriel, Franck and Hongler, Cl{\'e}ment},
  title     = {Neural Tangent Kernel: Convergence and Generalization in Neural Networks},
  booktitle = {Advances in Neural Information Processing Systems (NeurIPS)},
  year      = {2018},
  doi = {10.48550/arXiv.1806.07572}
}

@inproceedings{rahimi2007,
  author    = {Rahimi, Ali and Recht, Benjamin},
  title     = {Random Features for Large-Scale Kernel Machines},
  booktitle = {Advances in Neural Information Processing Systems (NeurIPS)},
  volume    = {20},
  year      = {2007}
}

@misc{tancik2020,
  author = {Tancik, Matthew and Srinivasan, Pratul P. and Mildenhall, Ben and Fridovich-Keil, Sara and Raghavan, Nithin and Singhal, Utkarsh and Ramamoorthi, Ravi and Barron, Jonathan T. and Ng, Ren},
  title  = {{F}ourier Features Let Networks Learn High Frequency Functions in Low Dimensional Domains},
  year   = {2020},
  eprint = {2006.10739},
  archivePrefix = {arXiv},
  primaryClass  = {cs.CV},
  doi = {10.48550/arXiv.2006.10739}
}

@article{wang2021eigenvector,
  author  = {Wang, Sifan and Wang, Hanwen and Perdikaris, Paris},
  title   = {On the Eigenvector Bias of {F}ourier Feature Networks: From Regression to Solving Multi-Scale {PDE}s with Physics-Informed Neural Networks},
  journal = {Computer Methods in Applied Mechanics and Engineering},
  volume  = {384},
  pages   = {113938},
  year    = {2021},
  doi     = {10.1016/j.cma.2021.113938}
}

@article{sojitra2025,
  title={FEDONet: Fourier-embedded DeepONet for spectrally accurate operator learning},
  author={Sojitra, Arth and Dhingra, Mrigank and San, Omer},
  journal={Journal of Computational Physics},
  pages={114931},
  year={2026},
  publisher={Elsevier},
  doi = {10.48550/arXiv.2509.12344
}
}

@misc{abid2026,
  author = {Abid, Muhammad and San, Omer},
  title  = {Spectral Embedding via {C}hebyshev Bases for Robust {D}eep{ONet} Approximation},
  year   = {2026},
  eprint = {2512.09165},
  archivePrefix = {arXiv},
  primaryClass  = {cs.LG},
  doi = {10.48550/arXiv.2512.09165}
}

@misc{yin2024cheb,
  author = {Yin, Pengfei and Ling, Shuai and Ying, Wenjun},
  title  = {{C}hebyshev Spectral Neural Networks for Solving Partial Differential Equations},
  year   = {2024},
  eprint = {2407.03347},
  archivePrefix = {arXiv},
  primaryClass  = {math.NA},
  doi = {10.48550/arXiv.2407.03347}
}

@misc{xu2024chebfeature,
  author = {Xu, Zhongshu and Chen, Yuan and Xiu, Dongbin},
  title  = {{C}hebyshev Feature Neural Network for Accurate Function Approximation},
  year   = {2024},
  eprint = {2409.19135},
  archivePrefix = {arXiv},
  primaryClass  = {cs.LG},
  doi = {10.48550/arXiv.2409.19135}
}

@article{brunton2020,
  author  = {Brunton, Steven L. and Noack, Bernd R. and Koumoutsakos, Petros},
  title   = {Machine Learning for Fluid Mechanics},
  journal = {Annual Review of Fluid Mechanics},
  volume  = {52},
  pages   = {477--508},
  year    = {2020},
  doi     = {10.1146/annurev-fluid-010719-060214}
}

@article{vinuesa2022,
  author  = {Vinuesa, Ricardo and Brunton, Steven L.},
  title   = {Enhancing Computational Fluid Dynamics with Machine Learning},
  journal = {Nature Computational Science},
  volume  = {2},
  pages   = {358--366},
  year    = {2022},
  doi     = {10.1038/s43588-022-00264-7}
}

@article{kochkov2021,
  author  = {Kochkov, Dmitrii and Smith, Jamie A. and Alieva, Ayya and Wang, Qing and Brenner, Michael P. and Hoyer, Stephan},
  title   = {Machine Learning-Accelerated Computational Fluid Dynamics},
  journal = {Proceedings of the National Academy of Sciences},
  volume  = {118},
  number  = {21},
  pages   = {e2101784118},
  year    = {2021},
  doi     = {10.1073/pnas.2101784118}
}

@article{wen2022ufno,
  author  = {Wen, Gege and Li, Zongyi and Azizzadenesheli, Kamyar and Anandkumar, Anima and Benson, Sally M.},
  title   = {{U}-{FNO}---An Enhanced {F}ourier Neural Operator-Based Deep-Learning Model for Multiphase Flow},
  journal = {Advances in Water Resources},
  volume  = {163},
  pages   = {104180},
  year    = {2022},
  doi     = {10.1016/j.advwatres.2022.104180}
}

@article{wen2023nested,
  author  = {Wen, Gege and Li, Zongyi and Long, Qirui and Azizzadenesheli, Kamyar and Anandkumar, Anima and Benson, Sally M.},
  title   = {Real-Time High-Resolution {CO$_2$} Geological Storage Prediction Using Nested {F}ourier Neural Operators},
  journal = {Energy \& Environmental Science},
  volume  = {16},
  pages   = {1732--1741},
  year    = {2023},
  doi     = {10.1039/D2EE04204E}
}

@misc{pathak2022,
  author = {Pathak, Jaideep and Subramanian, Shashank and Harrington, Peter and Raja, Sanjeev and Chattopadhyay, Ashesh and Mardani, Morteza and Kurth, Thorsten and Hall, David and Li, Zongyi and Azizzadenesheli, Kamyar and Hassanzadeh, Pedram and Kashinath, Karthik and Anandkumar, Animashree},
  title  = {{F}our{C}ast{N}et: A Global Data-Driven High-Resolution Weather Model Using Adaptive {F}ourier Neural Operators},
  year   = {2022},
  eprint = {2202.11214},
  archivePrefix = {arXiv},
  primaryClass  = {physics.ao-ph},
  doi = {10.48550/arXiv.2202.11214}
}

@article{lam2023graphcast,
  author  = {Lam, Remi and Sanchez-Gonzalez, Alvaro and Willson, Matthew and others},
  title   = {Learning Skillful Medium-Range Global Weather Forecasting},
  journal = {Science},
  volume  = {382},
  number  = {6677},
  pages   = {1416--1421},
  year    = {2023},
  doi     = {10.1126/science.adi2336}
}

@inproceedings{takamoto2022pdebench,
  author    = {Takamoto, Makoto and Praditia, Timothy and Leiteritz, Raphael and MacKinlay, Dan and Alesiani, Francesco and Pfl{\"u}ger, Dirk and Niepert, Mathias},
  title     = {{PDEB}ench: An Extensive Benchmark for Scientific Machine Learning},
  booktitle = {Advances in Neural Information Processing Systems (NeurIPS) Datasets and Benchmarks Track},
  year      = {2022}
}

@inproceedings{herde2024poseidon,
  author    = {Herde, Maximilian and Raoni{\'c}, Bogdan and Rohner, Tobias and K{\"a}ppeli, Roger and Molinaro, Roberto and de B{\'e}zenac, Emmanuel and Mishra, Siddhartha},
  title     = {Poseidon: Efficient Foundation Models for {PDE}s},
  booktitle = {Advances in Neural Information Processing Systems (NeurIPS)},
  volume    = {37},
  year      = {2024},
  doi       = {10.52202/079017-2311}
}

@article{sobol1967,
  author  = {Sobol', I. M.},
  title   = {On the Distribution of Points in a Cube and the Approximate Evaluation of Integrals},
  journal = {USSR Computational Mathematics and Mathematical Physics},
  volume  = {7},
  number  = {4},
  pages   = {86--112},
  year    = {1967},
  doi     = {10.1016/0041-5553(67)90144-9}
}

@inproceedings{kingma2015adam,
  author    = {Kingma, Diederik P. and Ba, Jimmy},
  title     = {{A}dam: A Method for Stochastic Optimization},
  booktitle = {International Conference on Learning Representations (ICLR)},
  year      = {2015}
}

@misc{hendrycks2016gelu,
  author = {Hendrycks, Dan and Gimpel, Kevin},
  title  = {{G}aussian Error Linear Units ({GELU}s)},
  year   = {2016},
  eprint = {1606.08415},
  archivePrefix = {arXiv},
  primaryClass  = {cs.LG},
  doi = {10.48550/arXiv.1606.08415}
}

@misc{jax2018,
  author = {Bradbury, James and Frostig, Roy and Hawkins, Peter and Johnson, Matthew James and Leary, Chris and Maclaurin, Dougal and Necula, George and Paszke, Adam and VanderPlas, Jake and Wanderman-Milne, Skye and Zhang, Qiao},
  title  = {{JAX}: Composable Transformations of {P}ython$+${N}um{P}y Programs},
  year   = {2018},
  howpublished = {\url{https://github.com/jax-ml/jax}}
}

@article{beucler2021,
  author  = {Beucler, Tom and Pritchard, Michael and Rasp, Stephan and Ott, Jordan and Baldi, Pierre and Gentine, Pierre},
  title   = {Enforcing Analytic Constraints in Neural Networks Emulating Physical Systems},
  journal = {Physical Review Letters},
  volume  = {126},
  number  = {9},
  pages   = {098302},
  year    = {2021},
  doi     = {10.1103/PhysRevLett.126.098302}
}

@article{cohen2015,
  author  = {Cohen, Albert and DeVore, Ronald},
  title   = {Approximation of High-Dimensional Parametric {PDE}s},
  journal = {Acta Numerica},
  volume  = {24},
  pages   = {1--159},
  year    = {2015},
  doi     = {10.1017/S0962492915000033}
}

@inproceedings{ohlberger2016,
  author    = {Ohlberger, Mario and Rave, Stephan},
  title     = {Reduced Basis Methods: Success, Limitations and Future Challenges},
  booktitle = {Proceedings of the Conference Algoritmy},
  pages     = {1--12},
  year      = {2016},
  eprint    = {1511.02021},
  archivePrefix = {arXiv},
  primaryClass  = {math.NA},
  doi       = {10.48550/arXiv.1511.02021}
}

@article{peherstorfer2022,
  author  = {Peherstorfer, Benjamin},
  title   = {Breaking the {K}olmogorov Barrier with Nonlinear Model Reduction},
  journal = {Notices of the American Mathematical Society},
  volume  = {69},
  number  = {5},
  pages   = {725--733},
  year    = {2022},
  doi     = {10.1090/noti2475}
}

@article{lee2020,
  author  = {Lee, Kookjin and Carlberg, Kevin T.},
  title   = {Model Reduction of Dynamical Systems on Nonlinear Manifolds Using Deep Convolutional Autoencoders},
  journal = {Journal of Computational Physics},
  volume  = {404},
  pages   = {108973},
  year    = {2020},
  doi     = {10.1016/j.jcp.2019.108973}
}

\appendix

\section{Proof of Proposition~\ref{prop:simplex}}
\label{app:proof}

With $M_i = M_0/\Sigma_i$, sum \cref{eq:mu} against $M_i$:
\begin{equation}
\sum_{i=1}^{3} M_i \mu_i
= M_0\left[
\frac{12}{\varepsilon}\sum_{i}\frac{1}{\Sigma_i}\frac{\partial F_0}{\partial c_i}
-\frac{4\Sigma_T}{\varepsilon}\left(\sum_{i}\frac{1}{\Sigma_i}\right)
 \sum_{j}\frac{1}{\Sigma_j}\frac{\partial F_0}{\partial c_j}
-\frac{3}{4}\varepsilon \sum_{i}\nabla^2 c_i
\right].
\end{equation}
By \cref{eq:spreading}, $\sum_i \Sigma_i^{-1} = 3/\Sigma_T$, so the second
bracketed term equals $\tfrac{12}{\varepsilon}\sum_j
\Sigma_j^{-1}\partial_{c_j}F_0$ and cancels the first exactly; the third
vanishes whenever \cref{eq:simplex} holds. Hence $\sum_i M_i\mu_i \equiv 0$.
Summing \cref{eq:ch} over $i$ and using $\nabla\!\cdot\vu = 0$,
\begin{equation}
\partial_t\Big(\sum_i c_i\Big) + \vu\!\cdot\!\nabla\Big(\sum_i c_i\Big)
= \nabla^2\Big(\sum_i M_i \mu_i\Big) = 0 ,
\end{equation}
so $\sum_i c_i \equiv 1$ is propagated exactly. The cancellation carries to the
discrete level provided the same discrete Laplacian is used in \cref{eq:mu} and
\cref{eq:ch}, which holds for the scheme of \Cref{app:dct}; the discrete
statement is \Cref{prop:simplex-discrete}. $\blacksquare$

\section{Spectral properties of the Chebyshev dictionary}
\label{app:cheb}

This appendix sources the conditioning claim of \Cref{sec:opnet:compare} and
reports the deviation from it on the grid actually used.

\subsection{Exact orthogonality}
\label{app:cheb:exact}

The Chebyshev polynomials satisfy
\begin{equation}
\int_{-1}^{1}\frac{T_m(\xi)T_n(\xi)}{\sqrt{1-\xi^2}}\,d\xi =
\begin{cases}
0, & m\neq n,\\
\pi, & m = n = 0,\\
\pi/2, & m = n \ge 1 ,
\end{cases}
\label{eq:cheb-orth}
\end{equation}
and tensorization preserves this: with
$w_3(\bs{\xi}) = \prod_{d}(1-\xi_d^2)^{-1/2}$,
\begin{equation}
\int_{[-1,1]^3} \Phi_{klm}\,\Phi_{k'l'm'}\; w_3\, d\bs{\xi}
= \gamma_k\gamma_l\gamma_m\,\delta_{kk'}\delta_{ll'}\delta_{mm'},
\qquad \gamma_0 = \pi,\; \gamma_{n\ge1} = \pi/2 .
\label{eq:cheb-orth-3d}
\end{equation}
The discrete counterpart is what the embedding actually sees. On an $m$-point
Gauss--Chebyshev set $\xi_q = \cos\big((q+\tfrac12)\pi/m\big)$ the relation
\begin{equation}
\frac{1}{m}\sum_{q=1}^{m} T_j(\xi_q)T_n(\xi_q) = \gamma_n\,\delta_{jn},
\qquad \gamma_0 = 1,\quad \gamma_{n\ge1} = \tfrac12 ,
\label{eq:cheb-disc}
\end{equation}
holds exactly for $j,n < m$, so on the tensor product of such sets the
empirical Gram matrix
\begin{equation}
G = \frac{1}{Q}\sum_{q=1}^{Q}
    \phi(\vzeta^{q})\,\phi(\vzeta^{q})^{\!\top}
\label{eq:gram-app}
\end{equation}
is diagonal with entries
$\gamma_k\gamma_l\gamma_m \in \{1,\tfrac12,\tfrac14,\tfrac18\}$. The extreme
values are attained by $\Phi_{000}$ and by any mode with all three indices
nonzero, both of which the graded set $\mathcal{A}$ contains, so
\begin{equation}
\kappa(G) = \frac{1}{1/8} = 8
\label{eq:kappa-app}
\end{equation}
independently of $d_{\mathrm{trunk}}$, of $K$, and of any random draw, and
rescaling each mode by $\gamma_\alpha^{-1/2}$ gives $\kappa(G) = 1$ exactly.
We confirm \cref{eq:kappa-app} numerically: with $K = 10$ the off-diagonal
entries of the one-dimensional Gram matrix are at most $1.8\times10^{-16}$ and
$\kappa = 2$ per axis, hence $2^3 = 8$ in three dimensions.

\subsection{Conditioning on the uniform query grid}
\label{app:cheb:uniform}

The query set is sampled uniformly rather than under the weight $w(\xi)$, so
\cref{eq:cheb-disc} does not apply. Evaluated directly with $K = 10$, the
per-axis condition numbers are $15.1$, $13.7$ and $5.2$, and the graded
$256$-mode dictionary gives
\begin{equation}
\kappa(G) \approx 2.7\times10^{2}, \qquad
\kappa(G) \approx 74 \ \text{after diagonal rescaling},
\label{eq:kappa-uniform}
\end{equation}
against $8$ and $1$ under the orthogonality weight. This is a property of the
sampling measure, not of the resolution: refining the grid drives the per-axis
figure to $13.51$ rather than to $2$, since
$\int_{-1}^{1}T_mT_n\,d\xi \ne 0$ in general (for instance
$\int T_0T_2\,d\xi = -\tfrac23$).

Two consequences. First, \cref{eq:kappa-app} is the orthogonal-weight value
rather than the value in force during training, the realized whitening being
weaker by about a factor of thirty. What the argument of
\Cref{sec:opnet:compare} needs is untouched: a Gram matrix that is fixed,
deterministic, seed-identical and bandwidth-free, which
\cref{eq:kappa-uniform} still is, where \Cref{app:rff} shows the Fourier
alternative varies over twenty orders of magnitude with a parameter that must
be chosen in advance. Second, normalized Legendre polynomials
$\tilde P_n = \sqrt{2n+1}\,P_n$ are orthogonal under the uniform measure and
give $\kappa(G) = 1.15$ per axis, $1.53$ in three dimensions. Nothing in our
mechanism selects Chebyshev over Legendre within the polynomial family, and the
comparison is untested here.

\section{Spectral properties of the random Fourier dictionary}
\label{app:rff}

The claims made for the Fourier embedding are comparative, so they need the
same treatment as \Cref{app:cheb}.

\subsection{Induced kernel}
\label{app:rff:kernel}

With $B_{ij}\sim\mathcal{N}(0,\sigma_B^2)$, the identity
$\sin a\sin b + \cos a\cos b = \cos(a-b)$ gives
\begin{equation}
\frac{1}{M}
\phi_{\mathrm{F}}(\boldsymbol{\zeta})^{\!\top}
\phi_{\mathrm{F}}(\boldsymbol{\zeta}')
=
\frac{1}{M}\sum_{j=1}^{M}
\cos\!\Big(
2\pi \mathbf{b}_j^{\!\top}
(\boldsymbol{\xi}-\boldsymbol{\xi}')
\Big)
\;\xrightarrow[M\to\infty]{}\;
\exp\!\Big(
-2\pi^2\sigma_B^2
\big\|\boldsymbol{\xi}-\boldsymbol{\xi}'\big\|^2
\Big),
\label{eq:rff-kernel-app}
\end{equation}
a Gaussian kernel of length scale $(2\pi\sigma_B)^{-1}$ \cite{rahimi2007}.
Three properties contrast with the polynomial dictionary. It is stationary, so
the prior is identical at the walls and in the interior, where the polynomial
prior is maximal at the corners of $[-1,1]^3$; on this problem that difference
is inert, since the error is smallest at the walls for every model
(\Cref{sec:res:localization}). It is isotropic, so one scalar sets the
resolvable scale on all three axes, a strong assumption for a field whose
structure in $t$ is a drift and in $x$ a thin neck, and the reason
\cref{eq:affine} is a fairness condition rather than a convention. And it is
built from functions periodic on the embedding scale, which is the mechanism
quantified in \Cref{app:rff:trend}.

\subsection{Conditioning and its bandwidth dependence}
\label{app:rff:cond}

Distinct frequency rows are uncorrelated and $\mathbb{E}[\sin^2] =
\mathbb{E}[\cos^2] = \tfrac12$, so $\mathbb{E}[G] = \tfrac12 I$: the dictionary
is whitened \emph{in expectation}. What the optimizer sees is the realized $G$,
which depends sharply on $\sigma_B$ (\Cref{tab:rff-cond}).

\begin{table}[ht!]
\centering
\caption{Conditioning of the Fourier feature Gram matrix on the query grid
($M = 128$). The graded Chebyshev dictionary on the same grid gives
$\kappa(G)\approx2.7\times10^{2}$ with no parameter to select.}
\label{tab:rff-cond}
\begin{tabular}{lccccccc}
\toprule
$\sigma_B$ & $0.05$ & $0.25$ & $0.5$ & $1.0$ & $2.0$ & $4.0$ & $8.0$ \\
\midrule
$\kappa(G)$ & $2\times10^{20}$ & $5\times10^{15}$ & $5\times10^{8}$
            & $3.1\times10^{2}$ & $3.4$ & $1.5$ & $2.3$ \\
\bottomrule
\end{tabular}
\end{table}

As $\sigma_B\to0$ the features collapse toward rank one and $G$ becomes
singular to single precision; near $\sigma_B = 4$ the dictionary is better
conditioned than the polynomial one by two orders of magnitude. Both facts
together are the point: the polynomial dictionary is not uniformly better
conditioned, it is \emph{unconditionally} conditioned, its $\kappa$ being a
fixed number known before training where the Fourier value sweeps twenty orders
of magnitude across a parameter chosen without knowing the answer. Seed
dependence compounds this: at $\sigma_B = 1$, six draws of $B$ give $\kappa$
between $3.1\times10^{2}$ and $4.0\times10^{3}$. The polynomial dictionary is
bit-identical across seeds, so the two embeddings should not be expected to
have comparable seed variance.

\subsection{The cost of a monotone trend}
\label{app:rff:trend}

\Cref{sec:res:accuracy} attributes the divergence between the embeddings on the
large-region concentration fields to trend representation. Fitting
$f(\xi) = \xi$ in least squares on the wall-normal grid, the first two
Chebyshev modes give a relative residual of $6\times10^{-17}$ --- exact, since
$T_1 = \xi$. Two Fourier features at the same budget give $6.9\times10^{-3}$ at
$\sigma_B = 0.5$, $2.8\times10^{-2}$ at $1$, $1.2\times10^{-1}$ at $2$ and
$6.3\times10^{-1}$ at $4$; reaching $10^{-3}$ takes $8$ features at
$\sigma_B = 1$ and $16$ at $\sigma_B = 2$. The temporal grid gives the same
figures to two digits.

The residual grows with bandwidth while the conditioning of
\Cref{tab:rff-cond} improves with it, so within the Fourier family no single
$\sigma_B$ serves both and the trade-off is structural rather than a matter of
tuning. The magnitudes are a few percent of the trend, which is the right scale
to explain the per-channel gap of \Cref{tab:reductions}; it would not explain
an order of magnitude, and we do not claim one.

\section{The DCT-II Poisson and Helmholtz solves}
\label{app:dct}

On a uniform cell-centred grid $N_x\times N_y$ with spacing $h$ and homogeneous
Neumann conditions, the five-point Laplacian is diagonalized by the type-II
DCT: with $\widehat{\cdot}$ the two-dimensional DCT-II,
\begin{equation}
\widehat{(\nabla_h^2 f)}_{mn} = \lambda_{mn}\,\widehat{f}_{mn},
\qquad
\lambda_{mn} = \frac{2}{h^2}
\left[\cos\!\Big(\frac{\pi m}{N_x}\Big)
    + \cos\!\Big(\frac{\pi n}{N_y}\Big) - 2\right].
\label{eq:dct-eig}
\end{equation}
Both the pressure Poisson solve and the linear implicit part of the
Cahn--Hilliard update reduce to elementwise division by a known function of
$\lambda_{mn}$, with the $\lambda_{00} = 0$ mode handled by fixing the mean, at
$\mathcal{O}(N_xN_y\log N_xN_y)$ per solve and with no iteration. The implicit
Cahn--Hilliard operator is a polynomial in $\nabla_h^2$
(\cref{eq:ch-scheme}), so the same transform pair serves both and each species
costs two DCTs per step.

\section{The benchmark dataset}
\label{app:data}

This appendix specifies the reference solver and the ensemble. With
\Cref{tab:fixed,tab:sobol} it reproduces the benchmark completely.

\subsection{Solver}
\label{app:data:solver}

The discretization is staggered (MAC) finite differences on a uniform grid,
second-order centred in space, first order in time, explicit for momentum and
for the Cahn--Hilliard nonlinearity and implicit for its linear part. Three
choices make an ensemble of this size affordable while preserving the structure
of \Cref{sec:model:simplex}.

The first exploits the boundary conditions. Under \cref{eq:bc} every scalar
satisfies a homogeneous Neumann condition, so the discrete Laplacian is
diagonalized exactly by the DCT-II (\Cref{app:dct}) and both the pressure
Poisson equation and the linear Cahn--Hilliard update become one-shot solves,
with no iteration, no tolerance and no run-to-run variability in cost.

The second scales the mobility and stabilization so that all three species
share one implicit operator. With $M_i = M_0/\Sigma_i$ and $S_i = S_0\Sigma_i$,
both $M_i\Sigma_i = M_0$ and $M_iS_i = M_0S_0$ are species-independent, so
\begin{equation}
\Big(\mathcal{I} - \Delta t\,M_0 S_0 \nabla_h^2
      + \tfrac{3}{4}\Delta t\,M_0\,\varepsilon\,\nabla_h^4\Big) c_i^{\,n+1}
= \Big(\mathcal{I} - \Delta t\,M_0 S_0 \nabla_h^2\Big) c_i^{\,n}
 + \Delta t\Big[-\nabla_h\!\cdot\!\big(\vu^{\,n} c_i^{\,n}\big)
 + M_i \nabla_h^2 N_i^{\,n}\Big]
\label{eq:ch-scheme}
\end{equation}
has the same left-hand operator for every phase. This is what carries
\Cref{prop:simplex} to the discrete level.

\begin{proposition}[Discrete simplex preservation]
\label{prop:simplex-discrete}
Let $M_i = M_0/\Sigma_i$ and $S_i = S_0\Sigma_i$, let the face-interpolated
phase fields in the advective flux be linear in $\vc$, and let
$\nabla_h\!\cdot\!\vu^{\,n} = 0$. If $\sum_i c_i^{\,n} = 1$ pointwise, then
$\sum_i c_i^{\,n+1} = 1$ pointwise.
\end{proposition}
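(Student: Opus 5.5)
Sum the scheme \cref{eq:ch-scheme} over $i=1,2,3$ and use linearity. Write $\sigma^{n}=\sum_i c_i^{\,n}$. Because the left-hand operator
\[
\mathcal{L}_h=\mathcal{I}-\Delta t\,M_0S_0\nabla_h^2+\tfrac34\Delta t\,M_0\varepsilon\nabla_h^4
\]
is the same for every species, which is exactly where $M_i\Sigma_i=M_0$ and $M_iS_i=M_0S_0$ enter, the summed equation is
\[
\mathcal{L}_h\,\sigma^{n+1}=\big(\mathcal{I}-\Delta t\,M_0S_0\nabla_h^2\big)\sigma^{n}
-\Delta t\,\nabla_h\!\cdot\!\Big(\vu^{\,n}\sum_i c_i^{\,n}\Big)_{\!\text{faces}}
+\Delta t\,\nabla_h^2\Big(\sum_i M_iN_i^{\,n}\Big).
\]
The goal is to show the right-hand side equals $\mathcal{L}_h\mathbf{1}$. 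Then invertibility of $\mathcal{L}_h$ on the Neumann grid function space gives $\sigma^{n+1}=\mathbf{1}$.

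The three terms are handled one at a time.

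\textbf{Stabilization term.} Since $\sigma^n=\mathbf 1$ and $\nabla_h^2\mathbf 1=0$ under discrete homogeneous Neumann conditions (constant vector, $\lambda_{00}=0$ in \cref{eq:dct-eig}), it reduces to $\mathbf 1$.

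\textbf{Advective term.} Face interpolation is linear in $\vc$ and reproduces constants, so the summed face values are exactly $1$. The flux then becomes $\nabla_h\!\cdot\vu^{\,n}$, which vanishes by hypothesis.

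\textbf{Nonlinear chemical-potential term.} The identity \cref{eq:consistency} of \Cref{prop:simplex} is pointwise and algebraic. It holds for arbitrary $\vc$, so it holds gridpointwise for the explicit $N_i^{\,n}=N_i(\vc^{\,n})$, giving $\sum_iM_iN_i^{\,n}\equiv0$. This needs the same $\partial F_0/\partial c_i$ evaluation in every species, which the scheme uses.

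For the left side, $\mathcal{L}_h\mathbf 1=\mathbf 1$ because $\nabla_h^2\mathbf 1=\nabla_h^4\mathbf 1=0$. So $\mathcal{L}_h(\sigma^{n+1}-\mathbf 1)=0$.

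The remaining and main step is invertibility of $\mathcal{L}_h$. I would diagonalize with the DCT-II as in \Cref{app:dct}. Its eigenvalues are
\[
1-\Delta t M_0S_0\lambda_{mn}+\tfrac34\Delta tM_0\varepsilon\lambda_{mn}^2,
\]
with $\lambda_{mn}\le0$, so every eigenvalue is at least $1$. No mean-fixing is involved, because the $(0,0)$ eigenvalue is $1$, not $0$.

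The subtle points to check carefully are:
\begin{itemize}
\item[(i)] The $\nabla_h^4$ and $\nabla_h^2$ in \cref{eq:ch-scheme} are the same Neumann discrete operators that annihilate constants.
\item[(ii)] The surface part of $\mu_i$ has been moved into the implicit operator with the species-independent coefficient $\tfrac34\varepsilon M_0$. This is what \cref{eq:ch-scheme} encodes, and it replaces the continuous argument's use of $\sum_i\nabla^2c_i=0$.
\end{itemize}
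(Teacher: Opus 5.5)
Your proof is correct and follows the paper's argument. You sum the scheme over species, use the species-independent operators that come from $M_i\Sigma_i=M_0$ and $M_iS_i=M_0S_0$, remove the explicit potential with the pointwise identity \cref{eq:consistency}, and remove the advective flux using linear, constant-reproducing interpolation together with $\nabla_h\!\cdot\vu^{\,n}=0$. Beyond the paper, your DCT-II bound $1-\Delta t M_0S_0\lambda_{mn}+\tfrac34\Delta t M_0\varepsilon\lambda_{mn}^2\ge 1$ proves the invertibility of $\mathcal{L}_h$ that the paper only asserts, and comparing against $\mathcal{L}_h\mathbf 1=\mathbf 1$ is slightly more precise than the paper's $\mathcal{L}(\sum_i c_i^{\,n+1})=\mathcal{L}(\sum_i c_i^{\,n})$, which holds only because $\sum_i c_i^{\,n}=\mathbf 1$.
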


\noindent\textit{Proof.}
Sum \cref{eq:ch-scheme} over $i$. The left operator and the first right
operator are species-independent, so they act on $\sum_i c_i^{\,n+1}$ and
$\sum_i c_i^{\,n}$. The explicit potential contributes
$\nabla_h^2(\sum_i M_i N_i^{\,n}) = 0$ by \cref{eq:consistency}, pointwise and
unconditionally. Linear interpolation commutes with the sum, so the advective
fluxes sum to $\vu^{\,n}$ and their divergence vanishes. Hence
$\mathcal{L}(\sum_i c_i^{\,n+1}) = \mathcal{L}(\sum_i c_i^{\,n})$ with
$\mathcal{L}$ invertible, and \cref{eq:ic} gives $\sum_i c_i^{\,0} = 1$.
\hfill$\square$

The third concerns the pressure. Variable density would destroy the
constant-coefficient Poisson structure, so we use the split of \cite{dodd2014}
with $\rho_0 = \min_i\rho_i$ and a lagged pressure $\hat p$, iterated three
times per step. One detail interacts with \Cref{prop:simplex-discrete}: the
velocity correction must reuse the same $\hat p$ that entered the Poisson
right-hand side, since substituting the converged pressure into both terms
leaves $\nabla_h\!\cdot\!\vu^{\,n+1} = \mathcal{O}(10^{-5})$ and destroys the
exact simplex property.

None of the three introduces data-dependent control flow, so the whole step is
fused into one jitted JAX \cite{jax2018} kernel and \texttt{vmap}ped over
members: one compilation serves all $1{,}024$ runs, completed in $37$ minutes.
The scheme being explicit in advection, viscosity and capillarity, the binding
stability limit over the whole design box is viscous,
$\Delta t \le 0.2h^2/\nu_{\max} = 4.60\times10^{-4}$ at $\mathrm{Re} = 20$,
against $1.72\times10^{-3}$ for the capillary limit, so
$\Delta t = 2.5\times10^{-4}$ retains a $1.8\times$ margin.

\begin{table}[ht!]
\centering
\caption{Discretization, fixed physical quantities and cost. With
\Cref{tab:sobol} this specifies the benchmark completely.}
\label{tab:fixed}
\begin{tabular}{ll}
\toprule
Domain $\Omega$                  & $(0,1)\times(0,2.5)$ \\
Simulation grid                  & $128 \times 320$ \ ($h = 1/128$) \\
Output grid (2$\times$ block mean) & $64 \times 160$ \\
Bubble diameter $d$              & $0.25$ \\
Interface height $y_{\mathrm{int}}$ & $0.9$ \\
Gravity $g$, water density $\rho_w$ & $0.98$, \ $1$ \\
Velocity scale $U=\sqrt{gd}$     & $0.4950$ \\
Interface thickness              & $\varepsilon = 4h = 3.125\times10^{-2}$
                                   \ ($\mathrm{Cn}=\varepsilon/L_x$) \\
Mobility                         & $M_i = M_0/\Sigma_i$, \
                                   $M_0 = 10^{-3}\,U\varepsilon L_x
                                   = 1.55\times10^{-5}$ \\
CH stabilization                 & $S_i = S_0\Sigma_i$, \ $S_0 = 24/\varepsilon = 768$ \\
Triple-well penalty              & $\Lambda = 0$ \\
Kinematic viscosity ratio        & $\nu_a/\nu_w = 2$ \\
Pressure sub-iterations          & $3$ \\
Time step $\Delta t$             & $2.5\times10^{-4}$ (viscosity limited) \\
Steps to $t = 4$                 & $16{,}000$ \\
Snapshots                        & $33$, spacing $\Delta t_{\mathrm{out}} = 0.125$ \\
Ensemble wall clock              & $37$ min for $1{,}024$ runs \\
\bottomrule
\end{tabular}
\end{table}

\subsection{Initial condition and ensemble design}
\label{app:data:design}

Each member starts at rest with a circular air bubble of radius $d/2$ centred
at $(x_b,y_b)$ beneath a flat interface at $y_{\mathrm{int}}$. With
$\delta(\vx) = \|\vx-(x_b,y_b)\| - d/2$,
\begin{equation}
c_a^0 = \tfrac12\Big[1-\tanh\!\big(2\delta/\varepsilon\big)\Big],
\quad
c_w^0 = \big(1-c_a^0\big)\cdot
        \tfrac12\Big[1-\tanh\!\big(2(y-y_{\mathrm{int}})/\varepsilon\big)\Big],
\quad
c_o^0 = 1 - c_w^0 - c_a^0 ,
\label{eq:ic}
\end{equation}
the third field by affine closure exactly as in the surrogate
(\cref{eq:closure}), so \Cref{prop:simplex-discrete} applies from the first
step. Because \cref{eq:ic} is a closed-form function of $\vp$, the surrogate's
nonzero error at $t = 0$ (\Cref{sec:res:sample}) is pure trunk representation
error.

The design is a scrambled Sobol sequence \cite{sobol1967} of
$N = 1{,}024 = 2^{10}$ points over the nine parameters of \Cref{tab:sobol},
drawn in a single base-two call so the balance properties are preserved.
Because \cref{eq:tensions} guarantees $\Sigma_i > 0$ over the whole box, no
draw is rejected and the realized design is exactly the low-discrepancy
sequence. The domain height was set by a pre-flight run of the corner most
likely to reach the lid, which puts the plume tip at $y = 1.23$ at $t = 4$
against $L_y = 2.5$.

\begin{table}[ht!]
\centering
\caption{The nine Sobol design variables and their sampling ranges. Parameters
1--3 control capillarity and wetting, 4--7 the material contrasts, 8--9 the
initial bubble placement.}
\label{tab:sobol}
\begin{tabular}{cllll}
\toprule
\# & Symbol & Meaning & Range & Sets \\
\midrule
1 & $\mathrm{Eo}$    & E\"otv\"os number $\rho_w g d^2/\sigma_{aw}$ & $[5,\,40]$      & $\sigma_{aw}$ \\
2 & $\mathrm{Re}$    & Reynolds number $\rho_w U d/\eta_w$          & $[20,\,70]$     & $\eta_w$ \\
3 & $s$              & normalized oil spreading coeff.\ $\Sigma_o/\sigma_{aw}$ & $[0.02,\,0.50]$ & $\sigma_{ao}$ \\
4 & $r_{ow}$         & tension ratio $\sigma_{ow}/\sigma_{aw}$      & $[0.30,\,0.70]$ & $\sigma_{ow}$ \\
5 & $\rho_o/\rho_w$  & oil-to-water density ratio                   & $[0.70,\,0.95]$ & $\rho_o$ \\
6 & $\rho_a/\rho_w$  & air-to-water density ratio                   & $[0.01,\,0.10]$ & $\rho_a,\ \eta_a$ \\
7 & $\eta_o/\eta_w$  & oil-to-water viscosity ratio                 & $[0.5,\,3.0]$   & $\eta_o$ \\
8 & $y_b$            & initial bubble centre height                 & $[0.35,\,0.50]$ & IC \\
9 & $x_b$            & initial bubble centre lateral position       & $[0.45,\,0.55]$ & IC \\
\bottomrule
\end{tabular}
\end{table}

\subsection{Diagnostics, well-posedness and known imperfections}
\label{app:data:diag}

Four scalar diagnostics are stored at every snapshot,
\begin{align}
y_{\mathrm{bubble}} &= \frac{\int_\Omega c_a\,y\;d\vx}{\int_\Omega c_a\;d\vx},
&
A_{\mathrm{sharp}} &= \big|\{\vx: c_a(\vx)>\tfrac12\}\big| ,
\label{eq:diag1}\\[2pt]
V_{\mathrm{plume}} &= \int_{y>y_{\mathrm{int}}} c_w \; d\vx ,
&
y_{\mathrm{top}} &= \max\big\{y > y_{\mathrm{int}} : \textstyle\max_x c_w(x,y) > 0.3\big\} ,
\label{eq:diag2}
\end{align}
serving two purposes: they verify that the ensemble is well posed as an
operator-learning target, and they provide the accuracy measures of
\Cref{sec:res:physical}.

\begin{figure}[H]
\centering
\includegraphics[width=0.95\textwidth]{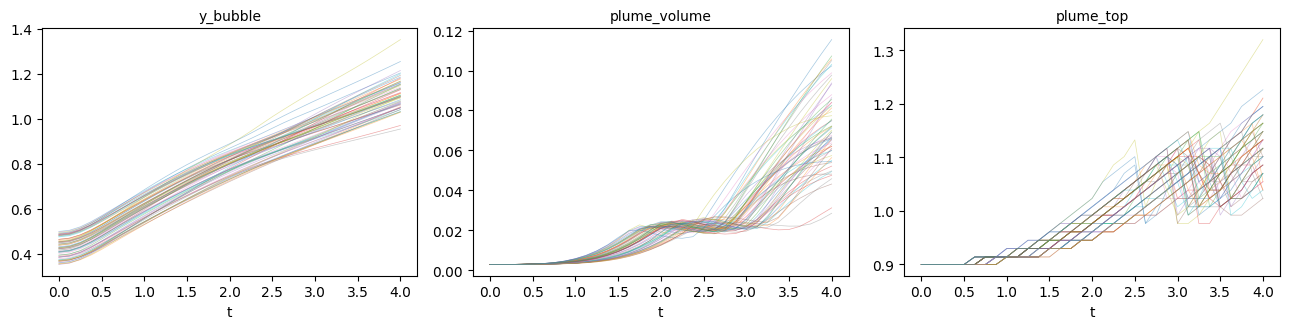}
\caption{Ensemble diagnostics across all members, from
\cref{eq:diag1,eq:diag2}. The response surfaces are smooth and
non-bifurcating: every member follows the same qualitative sequence --- rise,
deform, breakthrough, entrain, detach.}
\label{fig:diagnostics}
\end{figure}

\Cref{fig:diagnostics} shows what \cref{eq:operator} requires: a sweep across
$\mathrm{Eo}\in[5,40]$ gives a monotone response, with every case crossing the
interface and entraining a plume, so the design contains a single topological
class and $\Gop$ is single-valued. It is also a limitation, since the
surrogates are never tested near a regime boundary. Influence is deliberately
non-uniform --- a one-at-a-time study ranks $\mathrm{Eo}$ ($15.8\%$),
$\mathrm{Re}$ ($12.8\%$), $\rho_o$ ($12.3\%$), $\eta_o$ ($9.1\%$), $\rho_a$
($6.3\%$), $r_{ow}$ ($3.0\%$) and $s$ ($1.4\%$) --- since a surrogate's ability
to ignore a nuisance direction is worth testing. The placement parameters
register $18\%$ and $9.5\%$, but these are artifacts of a pointwise metric
under translation and are not physical sensitivities.

Each run is stored at $33$ snapshots over $t\in[0,4]$, coarsened by a
$2\times2$ block mean to $64\times160$, five channels in single precision, with
velocities interpolated from MAC faces to cell centres first. Block averaging
is linear, so the simplex constraint survives it exactly; discrete
incompressibility does not, and the stored velocities are divergence free only
to interpolation accuracy --- a caveat for anyone imposing
$\nabla\!\cdot\!\widehat{\vu} = 0$ on the coarse grid.

Two imperfections bear on \Cref{sec:results}. The simplex drift is small but
nonzero, at most $2.1\times10^{-4}$ over the ensemble, consistent with
single-precision accumulation over $16{,}000$ steps; it is a floating-point
residual, and it sets the bar against which the surrogate's own violation is
reported. And the bubble slowly dissolves: measured by $A_{\mathrm{sharp}}$,
the ensemble loses $15\%$ of its initial bubble area on average by $t = 4$.
This is the Ostwald-ripening-like dissolution of small features under
Cahn--Hilliard dynamics at finite $\varepsilon$ and mobility, a property of the
diffuse-interface model that constitutes the learning target rather than a
defect of the solver, and the price of $\varepsilon = 4h$, which is what allows
$1{,}024$ members in $37$ minutes.
\end{document}